\documentclass{article}

\usepackage[utf8]{inputenc}
\usepackage[T1]{fontenc}
\usepackage{hyperref}
\usepackage{url}
\usepackage{booktabs}
\usepackage{amsfonts}
\usepackage{nicefrac}
\usepackage[expansion=false]{microtype}
\usepackage{xcolor}
\usepackage{amsmath}
\usepackage{amssymb}
\usepackage{mathtools}
\usepackage{wrapfig}
\usepackage{amsthm}
\usepackage{multirow}
\usepackage{graphicx}
\usepackage{natbib}
\usepackage[margin=1in]{geometry}
\usepackage{float}
\usepackage{bm}
\usepackage{authblk}
\usepackage{array}
\usepackage[ruled,linesnumbered]{algorithm2e}
\usepackage{threeparttable}
\usepackage{placeins}
\makeatletter
\@ifundefined{c@definition}{\newtheorem{definition}{Definition}}{}
\@ifundefined{c@proposition}{\newtheorem{proposition}{Proposition}}{}
\makeatother

\title{When Does Overlap Help? OSU-Mem and a Cell-Conditional Analysis of Trajectory Memory for LLM Agents}

\author[1]{Mellow Baixuan Chen}
\author[2]{Xiangguo Sun$^{\dagger}$}
\affil[1]{Courant Institute of Mathematical Sciences, New York University, New York, USA}
\affil[2]{Southeast University, Nanjing, China}
\date{}

\begin{document}

\maketitle

\begin{center}
\small
\textsuperscript{$\dagger$}Corresponding author: Xiangguo Sun (\texttt{hsiang-kuo.sun@seu.edu.cn}).
\end{center}

\begin{abstract}
Long-horizon large language model (LLM) agents accumulate interaction trajectories that quickly exceed any practical prompt budget, and existing memory methods either truncate aggressively and lose non-local evidence or retain boilerplate that degrades decision quality. We ask a \emph{mechanism} question rather than claiming a better general-purpose memory system: \emph{when} does organizing trajectory memory into \emph{overlapping semantic units} (OSUs)---groups of related steps in which one step may belong to several units---help retrieval over flat or disjoint alternatives? We instantiate this in \textbf{OSU-Mem}, which retrieves from an overlapping OSU pool via budgeted coarse-to-fine expansion, and show its benefit is conditional: overlapping memory helps when the evidence steps a query needs share tool calls or entities, but hurts when those steps are fully heterogeneous and share neither. On a synthetic benchmark where evidence carries such shared structure by construction, OSU-Mem improves over the strongest baseline as the theory predicts; yet on a concatenated, constructed unaugmented \textbf{$\tau$-bench} setting its aggregate advantage over flat retrieval vanishes. Splitting queries by whether their evidence shares tools and entities shows this near-tie to be an artifact of mixing query types rather than a property of either method, and \textbf{ToolBench}, a controlled probe built to carry shared structure by design, corroborates the same mechanism via an overlap-vs.-disjoint \emph{construction} contrast (under a coverage-guided variant), isolating the construction principle rather than validating the full default system. Because the relevant sharing is cheaply estimable from metadata, the analysis yields a metadata-based heuristic for predicting when overlap is likely to improve retrieval. We deliberately isolate the retrieval layer, assessed by retrieval quality and an LLM-mediated evidence-selection stage. 
Code available at \href{https://anonymous.4open.science/r/OSUMem}{here}.
\end{abstract}

\section{Introduction}
\label{sec:intro}

LLM agents deployed in web- and tool-augmented environments routinely accumulate trajectories of dozens to hundreds of interaction steps \citep{kang2026aconoptimizingcontextcompression, ye2026agentfold}. Treating the full trajectory as prompt context is quickly infeasible, while compression-based approaches, from fixed-window truncation to learned periodic summarization \citep{lu2025scalingllmmultiturnrl}, risk discarding details that may become relevant later. The tension is real: aggressive compression forgets what a future query needs, while conservative retention floods the prompt with boilerplate that degrades decision quality.

Two properties of agent trajectories make existing approaches brittle. First, relevance is \emph{non-local}, since evidence steps needed at decision time may be far apart in trajectory time. Second, trajectories contain \emph{overlapping semantic threads} such as evidence about an entity, progress toward a subgoal, or a recurring tool-use motif, where a single step can participate in multiple threads simultaneously. Methods that assume temporal locality miss non-local evidence, and methods that enforce disjoint partitions confine multi-purpose steps to a single bucket and cannot reach them from other angles.

We instantiate these properties in \textbf{OSU-Mem}, which treats the interaction history itself as long-term memory and organizes it into \emph{overlapping semantic units} (OSUs). Each OSU groups semantically related steps, and OSUs may overlap so that a step participates in multiple threads. On top of this flat OSU pool, OSU-Mem performs budgeted query-adaptive coarse-to-fine expansion: retrieve a small set of coarse OSU representations, then selectively expand the most query-relevant ones into member steps until the budget is exhausted. We isolate the memory retrieval layer as our object of study, proposing OSU-Mem and characterizing a \emph{conditional} mechanism for when its overlap construction helps, rather than claiming it as a universal, general-purpose agent-memory system; closed-loop agent task interaction is out of scope.

\paragraph{A note on headline framing.} On a constructed, concatenated \textbf{unaugmented} $\tau$-bench setting, aggregated across all queries, OSU-Mem is \emph{not} better than Flat Vector (per-cell and aggregate figures in Section~\ref{sec:cross-tab}). Taken at face value, this aggregate would suggest OSU-Mem does not help on real trajectories. Our central argument is that this aggregate is a \emph{cell-mixture artifact}. Throughout, we label each query by whether its ground-truth evidence steps pairwise share a \emph{tool} signature (T+ if so, T$-$ if not) and an \emph{entity} (E+ if so, E$-$ if not), which partitions queries into four cells (T+E+, T+E$-$, T$-$E+, T$-$E$-$) defined formally in Definition~\ref{def:tpep}. The evidence distribution of raw $\tau$-bench is dominated by the T$-$E$-$ cell (evidence sharing neither tool nor entity), where overlap construction is actively harmful, which masks a large and significant advantage on the T+E+ cell (evidence sharing both). Throughout the paper we report both the aggregate and the cell-conditioned breakdown, and our claims are explicitly conditional on cell structure rather than universal. Our evaluation targets evidence retrieval quality and LLM-mediated evidence selection.

\paragraph{Contributions.} We frame OSU-Mem as a probe for a mechanism question---\emph{when does overlapping trajectory memory help retrieval?}---rather than as a general-purpose memory system, and our contributions are correspondingly conditional.
\begin{itemize}
\item We propose OSU-Mem and use a mechanism-driven synthetic benchmark, in which evidence shares an entity by construction (E+ for $100\%$ of queries; pairwise tool-signature sharing holds for $39.3\%$, so the strict T+E+ fraction is $39.3\%$), as a controlled testbed: under these conditions OSU-Mem improves Recall by $+39.9\%$ and Hit@2 (the fraction of queries that retrieve at least two evidence steps) by $+61.5\%$ relative to the strongest baseline at $B{=}256$, confirming the mechanism in the regime the theory predicts (Section~\ref{sec:results}).
\item On trajectories derived from \textbf{$\tau$-bench} (Section~\ref{sec:results}), a $2{\times}2$ evidence-structure cross-tabulation (cross-tab) localizes \emph{where} overlap helps: OSU-Mem outperforms Flat Vector on the T+E+ sub-population (stably across budgets) and is dominated on the T$-$E$-$ sub-population (increasingly so at larger budgets). The raw-cache aggregate near-tie is thus a consequence of cell mixture, not a property of either method. A controlled multi-burst augmentation that lifts all queries into T+E+ recovers the synthetic ordering---a prediction made from the cross-tab before the augmentation was run---and an LLM-mediated evidence-selection stage (Section~\ref{sec:llm-e2e}) finds the T+E+ advantage surviving an LLM filtering step.
\item On \textbf{ToolBench} (Section~\ref{sec:results}) we construct a controlled mechanism probe that is \emph{structurally T+E+ by design}: planted evidence steps are drawn from a single source trajectory and therefore pairwise share a tool signature (via the \texttt{<fname>\_for\_<api>} convention) and argument entities. Overlap construction beats disjoint construction with a monotonic dose-response in the coverage bonus, and disjoint assignment strips a fraction of evidence steps of every retrieval handle while overlap retains all of them. We treat ToolBench as a high-power measurement of the effect \emph{given} the favorable cell, not as evidence of deployment utility on natural ToolBench agents.
\item We argue and empirically support a \textbf{unified mechanism}: the two datasets give complementary evidence---$\tau$-bench supplying the cross-cell heterogeneity and ToolBench isolating the favorable cell---for the same cell-conditional effect across two evidence distributions and two retrieval algorithms. We further identify a cross-source metadata contamination \textbf{deployment pitfall} (fixed by one-line namespace prefixing---a practical lesson for anyone concatenating trajectories from multiple sources) and derive a \textbf{predictive rule}, a metadata-based estimate of when overlap helps \emph{retrieval}, from the $\tau$-bench cross-tab, whose budget-dependent thresholds are given in Appendix~\ref{app:deployment}.
\end{itemize}

\section{Related Work}
\label{sec:related}

\textbf{Memory for long-horizon LLM agents.} Operating-system-inspired virtual-context systems page interaction history in and out of the prompt \citep{packer2024memgptllmsoperatingsystems}. A-MEM organizes long-term memory via dynamically linked notes with memory evolution \citep{xu2025amemagenticmemoryllm}, HiAgent manages in-trial working memory through subgoal-based hierarchical chunking \citep{hu-etal-2025-hiagent}, and Zep models memory as a temporal knowledge graph \citep{rasmussen2025zeptemporalknowledgegraph}. More recent work improves efficiency through semantic compression and multi-view indexing \citep{liu2026simplememefficientlifelongmemory} or learns explicit memory operations via reinforcement learning \citep{yan2026memoryr1enhancinglargelanguage}. These systems establish the value of structured memory beyond naive replay, but typically organize memory into non-overlapping units or retrieve at fixed granularity, which limits reuse when one step is relevant under multiple semantic views.

\textbf{Experiential and reflective memory.} Reflexion \citep{shinn2023reflexionlanguageagentsverbal}, ExpeL \citep{zhao2024expelllmagentsexperiential}, and Voyager \citep{wang2023voyageropenendedembodiedagent} treat trajectories as learning signals via verbal self-reflection, extracted insights, or executable skill libraries. Their retrieval mechanisms are comparatively simple (prompt-append, task-level semantic similarity, embedding lookup) and lack budget-aware granularity control. OSU-Mem addresses a different layer, namely how to \emph{organize and index} raw trajectory steps for budgeted retrieval under non-local relevance.

\textbf{Multi-granularity and overlapping indexing in Retrieval-Augmented Generation (RAG).} Hierarchical retrieval via recursive clustering \citep{sarthi2024raptorrecursiveabstractiveprocessing} and tree-structured navigation \citep{chen2023walkingmemorymazecontext} have been explored in long-document RAG, and coarse-to-fine selection is a standard RAG theme \citep{gao2024retrievalaugmentedgenerationlargelanguage}. Practical RAG systems also support limited forms of overlap (parent-child chunks, sentence windows) \citep{gao2024retrievalaugmentedgenerationlargelanguage}. Our contribution is specifically at the intersection of \emph{budgeted retrieval from agent trajectories where semantic threads overlap and the same step must remain accessible under multiple contexts}, with an explicit cell-level characterization of when overlap pays off.

\textbf{Scope of baselines.} Because OSU-Mem is a contribution at the \emph{retrieval layer}, our comparisons focus on \emph{retrieval-into-budget} methods that expose a compatible fixed-budget interface ($\text{retrieve}(\text{query}, B) \to \text{context}$): Flat Vector, Maximal Marginal Relevance (MMR), Tree Abstraction (RAPTOR-style recursive clustering with a coarse-to-fine budget expansion), Disjoint Hierarchy, and Coverage baselines. Full agent-memory systems such as MemGPT \citep{packer2024memgptllmsoperatingsystems}, A-MEM \citep{xu2025amemagenticmemoryllm}, HiAgent \citep{hu-etal-2025-hiagent}, and Zep \citep{rasmussen2025zeptemporalknowledgegraph} operate at a different layer: their methodological content is the paging (MemGPT), note-linking (A-MEM), subgoal organization (HiAgent), or temporal-graph (Zep) logic that surrounds retrieval, not the budgeted retrieval step itself. A head-to-head comparison would therefore either strip them to a retrieval core---discarding the very machinery that defines them---or wrap their upstream logic around OSU-Mem's output, and neither isolates the retrieval mechanism this paper studies. We accordingly limit claims to the retrieval layer and leave a fair, fully controlled comparison against full memory systems to future work, which we discuss in Section~\ref{sec:discussion}.

\section{OSU-Mem: Method}
\label{sec:method}

\begin{figure}[h]
\centering
\includegraphics[width=0.6\linewidth]{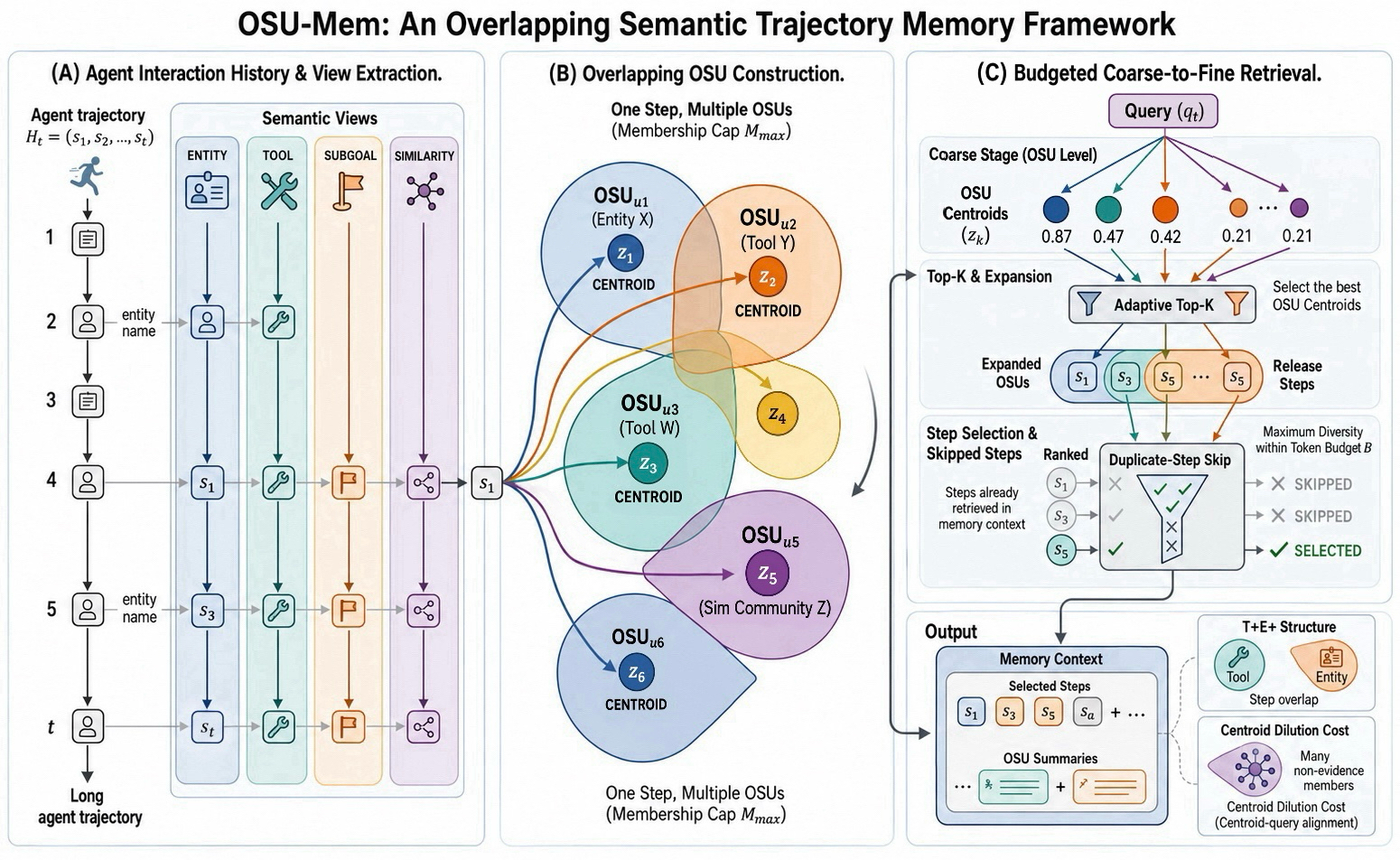}
\caption{Conceptual overview of OSU-Mem.
\textbf{(A)} The trajectory $\mathcal{H}_t = (s_1, \ldots, s_t)$ is processed through four
semantic views (entity, tool, subgoal, similarity).
\textbf{(B)} The views form an overlapping OSU pool where one step (e.g., $s_1$) may join
several OSUs (cap $M_{\max}$); each OSU stores a centroid $\mathbf{z}_k$.
\textbf{(C)} The query $q_t$ scores centroids and selects a top-$K$, which are expanded
into steps and packed into budget $B$ (with duplicate-step skipping) to form the memory
context. Insets recall the T+E+ and centroid-dilution mechanisms analyzed in
Section~\ref{sec:theory}.}
\label{fig:overview}
\end{figure}

Figure~\ref{fig:overview} gives a conceptual overview of the full pipeline; the remainder of this section formalizes each stage.

\subsection{Problem Formulation}
\label{sec:method-problem}

At decision time $t$, an agent has accumulated a trajectory $\mathcal{H}_t = (s_1, \ldots, s_t)$ of steps, each with an embedding $\mathbf{h}_i$ and a token cost $c_i$. A memory system receives a query embedding $q_t$ and a strict token budget $B$, and must return a set of retrieved items $\mathcal{M}(q_t) \subseteq \mathcal{H}_t$ with $\sum_{i \in \mathcal{M}} c_i \le B$.

\subsection{Step Representation}
\label{sec:method-step}

In the synthetic benchmark (Section~\ref{sec:datasets}), no text canonicalization or real tokenization is performed. Each step is assigned a 64-dimensional embedding from weighted mixtures of deterministic basis vectors: event-related steps use event ($.55$), entity ($.20$), tool ($.15$), and subgoal ($.10$) factors with noise $\sigma{=}.55$, while background steps omit the event factor and use entity ($.35$), tool ($.35$), and subgoal ($.30$) with $\sigma{=}.45$. Query embeddings emphasize the event basis with weights $.70$/$.15$/$.10$/$.05$ and lower noise $\sigma{=}.25$. Token costs are drawn from $\mathcal{N}(15, 3^2)$ clipped at 5. This gives full control over the similarity landscape and isolates the retrieval mechanism from encoder confounds.

In both $\tau$-bench and ToolBench real-trajectory evaluations, steps are encoded with a frozen \texttt{all-MiniLM-L6-v2} Sentence-BERT (SBERT) encoder (384-d, $\ell_2$-normalized) and token costs are estimated as $\max(5, \min(40, \lfloor |\text{text}|/4\rfloor))$.

\subsection{Overlapping Semantic Unit Construction}
\label{sec:method-osu}

An OSU $u_k \subseteq \{1, \ldots, t\}$ is a subset of step indices grouped by a shared semantic view. Unlike disjoint segmentation, OSU-Mem allows $u_j \cap u_k \neq \emptyset$, so that the same step can belong to multiple OSUs. We construct OSUs from four views computed from step annotations:

\paragraph{Entity view.} For each entity string appearing in step annotations, an OSU is constructed containing all steps that mention that entity. Entity strings are extracted from tool-call argument dictionaries (string values of length 2 to 80 and numeric values). To prevent degenerate single-topic OSUs on long trajectories, each entity-keyed group is split whenever consecutive occurrences are separated by more than $\Delta_{\text{ent}} = \max(60, 8\Delta_{\text{gap}})$ steps, and each resulting segment is capped at $S_{\max} = 20$ steps. Entity keys with fewer than $f_{\min} = 2$ associated steps are pruned since they provide no retrieval advantage over Flat Vector. The large temporal-gap threshold reflects the observation that entities (account numbers, destination names) tend to persist across long segments of agent interaction.

\paragraph{Tool view.} For each non-boilerplate tool signature, an OSU groups all steps invoking that tool. Temporal-gap splitting uses a tighter threshold $\Delta_{\text{tool}} = \max(12, 2\Delta_{\text{gap}})$ because a long gap between two uses of the same tool is usually a semantic boundary. The same size cap $S_{\max}$ and minimum-size filter $f_{\min}$ apply.

\paragraph{Subgoal view.} For each subgoal segment (identified from explicit labels or user-turn boundaries), a subgoal OSU is formed. The splitting threshold is the most aggressive: $\Delta_{\text{sg}} = \max(8, \Delta_{\text{gap}})$, since subgoals are the shortest semantic unit.

\paragraph{Similarity view.} Community merging on the step-embedding $k$-nearest-neighbor ($k$-NN) graph forms similarity OSUs that do not require explicit annotations. This view serves as a fallback for steps that lack entity or tool annotations.

\subsection{Membership Management and View Priority}
\label{sec:method-membership}

Because the four views may independently assign a step to multiple OSUs, a global membership cap $M_{\max} = 5$ limits the number of OSUs any single step may belong to. The value $M_{\max} = 5$ corresponds to one membership slot per view plus one slack slot; ablation shows no further gain beyond this value (Appendix~\ref{app:synth-ablation}). When the cap is exceeded, memberships are pruned according to a fixed view-priority rule:
\[
\text{entity} > \text{tool} > \text{similarity} > \text{subgoal}.
\]
Entity is prioritized because entity annotations are the most semantically stable cross-step identifiers in agent trajectories (account numbers, application programming interface (API) endpoints, file names persist across long segments). The $\tau$-bench cell analysis (Section~\ref{sec:cross-tab}) subsequently confirms that entity-keyed OSUs carry the strongest T+E+ signal. The distinction between $M_{\max} = 1$ (disjoint) and $M_{\max} \geq 2$ (overlap) is central to our method: in disjoint mode, each step has exactly one retrieval handle, while overlap provides multiple independent retrieval pathways. This distinction is analyzed formally in Section~\ref{sec:theory}.

\subsection{OSU Centroid and Representation}
\label{sec:method-centroid}

Each OSU $u_k$ is assigned a centroid embedding $\mathbf{z}_k$ defined as the $\ell_2$-normalized mean of its member step embeddings:
\begin{equation}
\mathbf{z}_k = \frac{\bar{\mathbf{h}}_k}{\|\bar{\mathbf{h}}_k\|_2}, \qquad \bar{\mathbf{h}}_k = \frac{1}{|u_k|}\sum_{i \in u_k}\mathbf{h}_i.
\label{eq:centroid}
\end{equation}
The centroid serves as the coarse-stage representation: a query's similarity to $\mathbf{z}_k$ determines whether OSU $u_k$ is expanded during retrieval. OSUs from different views may exhibit containment relationships (e.g., an entity OSU nested within a broader subgoal OSU); we do not model them as explicit parent-child structures and rely on shared step membership to preserve accessibility.

\subsection{Budgeted Query-Adaptive Expansion}
\label{sec:method-retrieval}

Given query $q_t$ and budget $B$, OSU-Mem (Algorithm~\ref{alg:osu}) scores all OSUs by $\mathbf{z}_k^\top q_t$, selects the top-$K$ (adaptively sized as $K = \min(|\{u_k\}|,\; \max(8,\; |\{u_k\}|/10,\; B/30))$, i.e., roughly $10\%$ of the OSU pool with a floor of 8 and a budget-proportional term), and then performs \emph{cost-normalized greedy expansion}. Rather than processing OSUs in a fixed score-descending order, each iteration selects the remaining top-$K$ OSU that maximizes $(r_k + \lambda_{\mathrm{nov}} \cdot \nu_k) / (\hat{c}_k + \epsilon)$, where $r_k = \mathbf{z}_k^\top q_t$ is the relevance score, $\nu_k$ is a blended novelty score of the OSU's candidates against the already-selected set (Appendix~\ref{app:method}), $\hat{c}_k$ is the total token cost of the OSU's top-$C$ candidates (with $C = 8$), and $\lambda_{\mathrm{nov}} = .3$. The novelty term is \emph{budget-gated}: it activates only when $B \ge B_{\mathrm{nov}} = 192$, because at very small budgets the selection pool is too small for novelty scoring to provide useful signal. When the selected OSU is expanded, its member steps are added to the output set in descending step-query similarity until the budget is reached. A duplicate-step check ensures that overlap among OSUs does not waste budget on redundant retrievals: if step $i$ has already been selected via an earlier OSU, it is skipped at zero cost. The resulting context is mixed-granularity: coarse OSU summaries provide broad cues, while expanded member steps provide precise evidence. Selected steps are packed in descending step--query similarity order before OSU summaries; under this default policy, summaries account for $<1\%$ of packed tokens, so the retrieval benefit derives from improved step selection via OSU-guided routing rather than from summaries.

\begin{algorithm}[t]
\small
\caption{OSU-Mem retrieval under a strict token budget.}
\label{alg:osu}

\KwIn{trajectory steps $\{s_i\}$, query embedding $q_t$, token budget $B$, OSU pool $\{u_k\}$ with centroids $\{\mathbf{z}_k\}$}
\KwOut{$\mathcal{S}$}

Score each OSU: $\sigma_k \leftarrow \mathbf{z}_k^\top q_t$, sort descending\;
Select top-$K$ OSUs adaptively (fraction of $|\{u_k\}|$, clamped)\;
Initialize selected set $\mathcal{S} \leftarrow \emptyset$, consumed budget $b \leftarrow 0$, available $\leftarrow$ top-$K$ OSU set\;

\While{available $\neq \emptyset$ and $b < B$}{
  \ForEach{OSU $u_k$ in available}{
    Compute relevance $r_k \leftarrow \mathbf{z}_k^\top q_t$\;
    Compute candidate cost $\hat{c}_k \leftarrow \sum_{i \in \text{top-}C} c_i$\tcp*[r]{$C{=}8$ candidates per OSU}
    
    \eIf{$B \ge B_{\mathrm{nov}}$}{
      Compute blended novelty $\nu_k$ of $u_k$'s candidates against $\mathcal{S}$\;
    }{
      $\nu_k \leftarrow 0$\;
    }
    
    $\text{score}_k \leftarrow (r_k + \lambda_{\mathrm{nov}} \cdot \nu_k) / (\hat{c}_k + \epsilon)$\;
  }

  $k^* \leftarrow \arg\max_k \text{score}_k$; remove $k^*$ from available\;
  Rank members $i \in u_{k^*}$ by $\mathbf{h}_i^\top q_t$\;

  \ForEach{candidate step $i$}{
    \If{$i \notin \mathcal{S}$ and $b + c_i \le B$}{
      $\mathcal{S} \leftarrow \mathcal{S} \cup \{i\}$, $b \leftarrow b + c_i$\;
    }
  }
}

\Return{$\mathcal{S}$}\tcp*[r]{OSU summaries appended with remaining budget}

\end{algorithm}

\subsection{Hyperparameters}
\label{sec:method-hyperparams}

All OSU-Mem hyperparameters, their default values, and the rationale for each choice are listed in Appendix~\ref{app:method} (Table~\ref{tab:hyperparams}). All values are fixed across the three benchmarks (synthetic, $\tau$-bench, ToolBench); none was tuned per-benchmark. The single per-benchmark deviation is the retrieval algorithm: the synthetic and $\tau$-bench benchmarks use the default coarse-to-fine algorithm (Algorithm~\ref{alg:osu}), while the ToolBench benchmark uses a coverage-guided greedy variant because SBERT-encoded ToolBench has entangled OSU centroids on which coarse routing underperforms (Section~\ref{sec:overall}).

None of these defaults was tuned on the $\tau$-bench T+E+ cell. We fixed them on the synthetic benchmark before any real-trajectory analysis, to prevent the cell-conditional claim from being an artifact of hyperparameter overfitting to $\tau$-bench's T+E+ sub-population.

\section{Theoretical Analysis}
\label{sec:theory}

In this section, we formalize three properties of overlapping OSU construction that underpin the empirical findings in Section~\ref{sec:results}. We first show that overlap increases retrieval reachability relative to disjoint assignment---strictly so under an independence model, and non-strictly without it (Proposition~\ref{prop:reachability})---then characterize the amplified benefit under the T+E+ evidence-structure condition (Proposition~\ref{prop:tpep}), and finally analyze the centroid-dilution cost that explains the T$-$E$-$ disadvantage at large budgets (Proposition~\ref{prop:dilution}).

\subsection{Retrieval Reachability under Overlap}
\label{sec:theory-reach}

\begin{definition}[Retrieval reachability]
\label{def:reachability}
Given a query $q$, a step $s_i$ is \emph{retrievable} if there exists an OSU $u_k$ such that $i \in u_k$ and $u_k$ is selected at the coarse stage (i.e., $\mathbf{z}_k^\top q \ge \tau$ for some selection threshold $\tau$, or equivalently $u_k$ ranks in the top-$K$). We denote the retrieval reachability of $s_i$ as $P_{\mathrm{reach}}(s_i \mid q) = P(\exists\, u_k \ni i : u_k \text{ selected})$.
\end{definition}

\begin{proposition}[Overlap increases reachability; stylized selection model]
\label{prop:reachability}
Let $\mathcal{V} = \{v_1, \ldots, v_V\}$ denote the set of semantic views. For a step $s_i$ with non-trivial annotations in views $\mathcal{V}_i \subseteq \mathcal{V}$, let $p_{v}(q) = P(u_k^{(v)} \text{ selected} \mid i \in u_k^{(v)})$ denote the probability that the OSU containing $s_i$ under view $v$ is selected at the coarse stage, where we model view-level selections as independent events. Under disjoint assignment ($M_{\max} = 1$), step $s_i$ is assigned to exactly one view $v^* \in \mathcal{V}_i$ by the priority rule, yielding
\begin{equation}
P_{\mathrm{reach}}^{\mathrm{disj}}(s_i \mid q) = p_{v^*}(q).
\label{eq:reach-disj}
\end{equation}
Under overlap ($M_{\max} \ge |\mathcal{V}_i|$), step $s_i$ belongs to one OSU per applicable view, yielding
\begin{equation}
P_{\mathrm{reach}}^{\mathrm{over}}(s_i \mid q) = 1 - \prod_{v \in \mathcal{V}_i} \bigl(1 - p_v(q)\bigr).
\label{eq:reach-over}
\end{equation}
When $|\mathcal{V}_i| \ge 2$, at least two views have $p_v(q) > 0$, and $p_{v^*}(q) < 1$, we have $P_{\mathrm{reach}}^{\mathrm{over}}(s_i \mid q) > P_{\mathrm{reach}}^{\mathrm{disj}}(s_i \mid q)$. (The condition $p_{v^*} < 1$ is non-restrictive in practice since coarse top-$K$ selection with $K \ll |\{u_k\}|$ never guarantees selection of any individual OSU.) The \emph{strict} inequality relies on the independence model of Equation~\ref{eq:reach-over}; without independence, a weaker union bound still yields the non-strict $P_{\mathrm{reach}}^{\mathrm{over}} \ge P_{\mathrm{reach}}^{\mathrm{disj}}$ (see the note following the proof).
\end{proposition}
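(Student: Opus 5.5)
The plan is to compute both reachability probabilities directly from Definition~\ref{def:reachability} under the stylized selection model, and then compare them by an elementary product inequality.

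\emph{Disjoint case.} With $M_{\max}=1$, the priority rule places $s_i$ in exactly one OSU, $u^{(v^*)}$, with $v^*\in\mathcal{V}_i$. The event $\{\exists\,u_k\ni i: u_k\text{ selected}\}$ is then the single event that $u^{(v^*)}$ is selected. Its probability is $p_{v^*}(q)$ by definition, which gives Equation~\ref{eq:reach-disj}.

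\emph{Overlap case.} With $M_{\max}\ge|\mathcal{V}_i|$, the cap never binds for $s_i$, so it keeps one membership per applicable view. Reachability is the union of the selection events $A_v$ for $v\in\mathcal{V}_i$. Taking complements and applying the independence assumption gives
\[
P\Bigl(\textstyle\bigcup_v A_v\Bigr)=1-\prod_{v\in\mathcal{V}_i}\bigl(1-p_v(q)\bigr),
\]
which is Equation~\ref{eq:reach-over}.

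\emph{Strict inequality.} Write
\[
1-P^{\mathrm{over}}=(1-p_{v^*})\prod_{v\neq v^*}(1-p_v).
\]
Every factor lies in $[0,1]$, so $P^{\mathrm{over}}\ge P^{\mathrm{disj}}$. For strictness I need $(1-p_{v^*})>0$, which is the hypothesis $p_{v^*}<1$, together with $\prod_{v\neq v^*}(1-p_v)<1$, i.e.\ some $v\neq v^*$ with $p_v>0$. The hypothesis only says that at least two views have $p_v>0$. Since at most one of these can be $v^*$, at least one of them is some $v\neq v^*$, and that view has $p_v>0$. Multiplying the positive quantity $1-p_{v^*}$ by a factor strictly below $1$ then gives $1-P^{\mathrm{over}}<1-p_{v^*}$, which is the claim.

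\emph{The non-strict version without independence.} Here no product formula is available. I would argue by event inclusion instead: $A_{v^*}\subseteq\bigcup_{v\in\mathcal{V}_i}A_v$, so monotonicity of probability gives $P^{\mathrm{over}}\ge p_{v^*}=P^{\mathrm{disj}}$. This is the "weaker union" argument the statement alludes to.

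\emph{Main obstacle.} The comparison needs the overlap and disjoint pools to assign the same selection probability $p_{v^*}$ to the OSU $u^{(v^*)}$. In reality, changing $M_{\max}$ alters the OSU memberships, their centroids, and the top-$K$ competition, so this does not hold automatically. I would handle it by stating explicitly that the stylized model holds each $p_v(q)$ fixed across the two constructions, and that this modeling assumption is exactly what the proposition asserts.
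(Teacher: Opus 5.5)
Your proof is correct and follows the paper's argument: both reachability formulas come straight from the definition plus independence. For strictness you factor $1-P^{\mathrm{over}}=(1-p_{v^*})\prod_{v\neq v^*}(1-p_v)$, while the paper drops all factors except those for $v^*$ and one $v'$ and expands to $p_{v^*}+p_{v'}(1-p_{v^*})$, which is the same bound; likewise your event-inclusion argument without independence is the paper's $P_{\mathrm{reach}}^{\mathrm{over}}\ge\max_v p_v\ge p_{v^*}$ (which, as you implicitly note, is monotonicity rather than a union bound).
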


\begin{proof}
Under disjoint assignment, $s_i$ participates in exactly one OSU $u_k^{(v^*)}$, so it is reachable only if that single OSU is selected, giving $P_{\mathrm{reach}}^{\mathrm{disj}} = p_{v^*}$. Under overlap, $s_i$ participates in $|\mathcal{V}_i|$ OSUs (one per view), and reachability requires \emph{at least one} to be selected. By the independence assumption, $P_{\mathrm{reach}}^{\mathrm{over}} = 1 - \prod_{v \in \mathcal{V}_i}(1 - p_v)$. Now, $P_{\mathrm{reach}}^{\mathrm{over}} - P_{\mathrm{reach}}^{\mathrm{disj}} = 1 - \prod_{v}(1 - p_v) - p_{v^*}$. Note that $1 - \prod_{v}(1 - p_v) \ge 1 - (1 - p_{v^*})(1 - p_{v'}) = p_{v^*} + p_{v'} - p_{v^*}p_{v'}$ for any $v' \neq v^*$ with $p_{v'} > 0$. Since $p_{v'} - p_{v^*}p_{v'} = p_{v'}(1 - p_{v^*}) > 0$, the strict inequality follows.
\end{proof}

\paragraph{Connection to experiments.} On ToolBench, under disjoint construction ($M_{\max}{=}1$), $13.2\%$ of evidence steps belong to \emph{no} OSU because view-priority pruning forces each step into exactly one view, and the minimum-size filter then deletes views with fewer than two members. These evicted evidence steps have $P_{\mathrm{reach}}^{\mathrm{disj}} = 0$. Under overlap ($M_{\max}{=}5$), $0\%$ of evidence is evicted---directly instantiating the strict inequality in Proposition~\ref{prop:reachability}.

\paragraph{Note on the independence assumption.} The independence model in Equation~\ref{eq:reach-over} is an approximation. In practice, all views' OSUs compete for the same top-$K$ pool, creating negative correlation: when one view's OSU ranks highly, it displaces others (including sibling-view OSUs of the same step). Additionally, switching from $M_{\max}{=}1$ to $M_{\max}{=}5$ increases the total OSU count, which may change each OSU's absolute selection probability even though $K$ adapts proportionally. Both effects cause the independence formula to \emph{overestimate} the magnitude of the reachability gain. The \emph{direction} of the gain, however, is robust: a weaker, assumption-free union bound gives $P_{\mathrm{reach}}^{\mathrm{over}} \ge \max_{v} p_v \ge p_{v^*} = P_{\mathrm{reach}}^{\mathrm{disj}}$, which preserves the non-strict inequality without any independence assumption. The strict inequality additionally requires that at least one non-priority view has $p_v > 0$ and $p_{v^*} < 1$; under the independence model these conditions are sufficient, and even without independence they suffice unless the view-selection events are pathologically coupled (e.g., perfectly negatively correlated such that $u_{v'}$ is never selected when $u_{v^*}$ is not). In the competitive top-$K$ setting, negative correlation is the norm but perfect coupling is not, so the strict inequality holds in practice even though its formal guarantee relies on the independence model.

\subsection{Amplified Benefit under T+E+ Evidence Structure}
\label{sec:theory-tpep}

We now formalize why the overlap advantage is amplified when evidence steps share tool signatures and entities (the T+E+ condition).

\begin{definition}[T+E+ evidence structure]
\label{def:tpep}
Given a query $q$ with ground-truth evidence set $\mathcal{E}_q = \{s_{e_1}, \ldots, s_{e_n}\}$, we say $\mathcal{E}_q$ satisfies the \emph{T+E+} condition if (i) there exist distinct $s_{e_i}, s_{e_j} \in \mathcal{E}_q$ sharing at least one tool signature (the T+ condition), and (ii) there exist distinct $s_{e_k}, s_{e_l} \in \mathcal{E}_q$ sharing at least one entity string (the E+ condition). The pairs $(s_{e_i}, s_{e_j})$ and $(s_{e_k}, s_{e_l})$ need not be the same.
\end{definition}

The remaining three cells of the $2{\times}2$ cross-tabulation are defined analogously: \emph{T+E$-$} denotes evidence satisfying T+ but not E+ (tool sharing without entity sharing), \emph{T$-$E+} denotes evidence satisfying E+ but not T+ (entity sharing without tool sharing), and \emph{T$-$E$-$} denotes evidence satisfying neither condition (fully heterogeneous evidence with no pairwise tool or entity sharing).

\begin{proposition}[T+E+ amplifies overlap recall]
\label{prop:tpep}
Let $\mathcal{E}_q$ satisfy the T+E+ condition with $n = |\mathcal{E}_q|$ evidence steps. Define the \emph{cluster recall} of an OSU as $\mathrm{CR}(u_k) = |u_k \cap \mathcal{E}_q| / n$. Under the T+E+ condition, and assuming the evidence steps with shared annotations fall within the temporal-gap splitting thresholds ($\Delta_{\mathrm{ent}}$, $\Delta_{\mathrm{tool}}$) so that sharing implies co-membership in the same view-keyed OSU, there exist at least one entity-keyed OSU $u^{(\mathrm{ent})}$ with $|u^{(\mathrm{ent})} \cap \mathcal{E}_q| \ge 2$ (from the shared entity) and at least one tool-keyed OSU $u^{(\mathrm{tool})}$ with $|u^{(\mathrm{tool})} \cap \mathcal{E}_q| \ge 2$ (from the shared tool). These two OSUs may bind \emph{different subsets} of evidence. Let $\mathcal{E}^{(\mathrm{ent})} = u^{(\mathrm{ent})} \cap \mathcal{E}_q$ and $\mathcal{E}^{(\mathrm{tool})} = u^{(\mathrm{tool})} \cap \mathcal{E}_q$. Fix the entity OSU $u^{(\mathrm{ent})}$ and consider the evidence reachable through it under both regimes:

\emph{Under overlap} ($M_{\max} \ge 2$), the evidence steps in $u^{(\mathrm{ent})}$ also retain their tool-view memberships, so $u^{(\mathrm{tool})}$ coexists in the index. A single coarse-stage selection of $u^{(\mathrm{ent})}$ reaches $\mathcal{E}^{(\mathrm{ent})}$; additionally, $u^{(\mathrm{tool})}$ provides a second coarse-stage entry point reaching $\mathcal{E}^{(\mathrm{tool})}$. The evidence reachable through at least one of these two pathways is
\begin{equation}
\mathrm{CR}_{\mathrm{combined}}^{\mathrm{over}} = \frac{|\mathcal{E}^{(\mathrm{ent})} \cup \mathcal{E}^{(\mathrm{tool})}|}{n} \ge \frac{2}{n}.
\label{eq:cluster-recall-overlap}
\end{equation}

\emph{Under disjoint} ($M_{\max} = 1$) with entity~$>$~tool priority, assuming all evidence steps carry at least one entity annotation (which holds in practice since entities are extracted from tool-call argument dictionaries), every evidence step is assigned to its entity-keyed OSU, provided that its entity group contains at least $f_{\min} = 2$ members in the trajectory (singleton entity groups are pruned; however, T+E+ guarantees at least one entity group with $\ge 2$ evidence members, so $u^{(\mathrm{ent})}$ survives). Consequently, tool-keyed OSUs lose their evidence members. Selecting the same $u^{(\mathrm{ent})}$ at the coarse stage reaches only $\mathcal{E}^{(\mathrm{ent})}$:
\begin{equation}
\mathrm{CR}^{\mathrm{disj}}(u^{(\mathrm{ent})}) = \frac{|\mathcal{E}^{(\mathrm{ent})}|}{n}.
\label{eq:cluster-recall-disj}
\end{equation}
When $\mathcal{E}^{(\mathrm{tool})} \not\subseteq \mathcal{E}^{(\mathrm{ent})}$, i.e., the tool-view OSU binds evidence steps with entities distinct from those in $u^{(\mathrm{ent})}$, we have $\mathrm{CR}_{\mathrm{combined}}^{\mathrm{over}} > \mathrm{CR}^{\mathrm{disj}}(u^{(\mathrm{ent})})$.
\end{proposition}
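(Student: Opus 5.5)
The proof is constructive and mostly set-theoretic. I would first establish existence of the two OSUs. Then I would compute reachable evidence under each regime. Finally I would compare the two via a strict set inclusion.

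\textbf{Existence.} By Definition~\ref{def:tpep}(ii) there are distinct $s_{e_k}, s_{e_l}\in\mathcal{E}_q$ sharing an entity string $e$. The entity view builds a group keyed by $e$ containing every step mentioning $e$. By the temporal-gap assumption, $s_{e_k}$ and $s_{e_l}$ are not separated by a split. I also need them to survive the $S_{\max}$ cap, which I would read as part of the ``sharing implies co-membership'' hypothesis and state explicitly. Under these conditions they lie in a common segment $u^{(\mathrm{ent})}$. That segment has at least $2\ge f_{\min}$ members, so it is not pruned, and $|u^{(\mathrm{ent})}\cap\mathcal{E}_q|\ge 2$. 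The identical argument with condition (i), the tool view, and $\Delta_{\mathrm{tool}}$ yields $u^{(\mathrm{tool})}$.

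\textbf{Overlap regime.} The key point is that with $M_{\max}\ge 2$ the membership cap does not force a step to give up its tool membership in favor of its entity membership. Under the priority rule entity $>$ tool $>$ similarity $>$ subgoal, a step keeps both its top-priority entity OSU and its tool OSU whenever $M_{\max}\ge 2$. For this I would assume each evidence step has at most one entity-keyed OSU, or more generally that pruning of lower-priority views does not reach the tool slot; I expect this to be the main obstacle. Taking that as given, both $u^{(\mathrm{ent})}$ and $u^{(\mathrm{tool})}$ keep their evidence members. The union of what they reach is $\mathcal{E}^{(\mathrm{ent})}\cup\mathcal{E}^{(\mathrm{tool})}$, whose size is at least $|\mathcal{E}^{(\mathrm{ent})}|\ge 2$. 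This gives Equation~\ref{eq:cluster-recall-overlap}.

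\textbf{Disjoint regime.} With $M_{\max}=1$, each evidence step carries an entity annotation and keeps only its highest-priority membership. That membership is the entity OSU, provided the step's entity group survives $f_{\min}$, which holds for $u^{(\mathrm{ent})}$ by the existence step. Evidence is therefore removed from tool-keyed OSUs. Selecting $u^{(\mathrm{ent})}$ then reaches exactly $u^{(\mathrm{ent})}\cap\mathcal{E}_q=\mathcal{E}^{(\mathrm{ent})}$, which is Equation~\ref{eq:cluster-recall-disj}. One caveat: the set $u^{(\mathrm{ent})}$ itself is unchanged between regimes, since entity memberships have top priority and are never pruned.

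\textbf{Comparison.} If $\mathcal{E}^{(\mathrm{tool})}\not\subseteq\mathcal{E}^{(\mathrm{ent})}$, pick $x\in\mathcal{E}^{(\mathrm{tool})}\setminus\mathcal{E}^{(\mathrm{ent})}$. Then $\mathcal{E}^{(\mathrm{ent})}\subsetneq\mathcal{E}^{(\mathrm{ent})}\cup\mathcal{E}^{(\mathrm{tool})}$, and dividing by $n$ gives the strict inequality. I would close with a remark linking this to Proposition~\ref{prop:reachability}. The step $x$ has a second entry point under overlap, namely the tool OSU, which is the per-step reachability gain that Proposition~\ref{prop:reachability} describes.
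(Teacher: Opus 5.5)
Your proposal is correct and follows essentially the same route as the paper's proof. Both arguments get the two OSUs from the T+E+ pairs via the co-membership hypothesis, use the entity-first priority rule under $M_{\max}=1$ to strip evidence from tool-keyed OSUs, and get strictness from $\mathcal{E}^{(\mathrm{ent})} \subsetneq \mathcal{E}^{(\mathrm{ent})} \cup \mathcal{E}^{(\mathrm{tool})}$. Your extra care goes beyond the paper's proof, which silently assumes these points away: you flag the $S_{\max}$ cap, and you flag that steps with several entity memberships could crowd out the tool slot when $M_{\max}$ is small. The same multi-entity issue also means that "entity memberships are never pruned" is not quite right under $M_{\max}=1$, where $u^{(\mathrm{ent})}$ can lose members, though only in the direction that strengthens the inequality.
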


\paragraph{Scope of comparison.} Proposition~\ref{prop:tpep} compares evidence reachable through a \emph{fixed entity OSU} $u^{(\mathrm{ent})}$ under both regimes, plus the additional tool-view pathway available only under overlap. This is a per-coarse-selection comparison. In full top-$K$ retrieval, disjoint mode can also select multiple entity-keyed OSUs whose union may cover more evidence; however, it cannot access tool-view OSUs at all, since entity~$>$~tool priority strips all evidence from them. The qualitative insight, that overlap provides a cross-view retrieval pathway unavailable under disjoint, holds regardless of $K$.

\begin{proof}
Under T+E+, at least one pair of evidence steps shares an entity and at least one pair shares a tool (Definition~\ref{def:tpep}). Entity sharing guarantees $|u^{(\mathrm{ent})} \cap \mathcal{E}_q| \ge 2$, and tool sharing guarantees $|u^{(\mathrm{tool})} \cap \mathcal{E}_q| \ge 2$, giving the lower bound in Equation~\ref{eq:cluster-recall-overlap}. Under disjoint with entity~$>$~tool priority, each evidence step possessing an entity annotation is assigned to an entity-keyed OSU and removed from all tool-keyed OSUs. Selecting $u^{(\mathrm{ent})}$ therefore reaches exactly $\mathcal{E}^{(\mathrm{ent})}$. Under overlap, the same entity OSU still reaches $\mathcal{E}^{(\mathrm{ent})}$, but evidence steps in $u^{(\mathrm{ent})}$ also belong to $u^{(\mathrm{tool})}$, which additionally reaches $\mathcal{E}^{(\mathrm{tool})}$. When $\mathcal{E}^{(\mathrm{tool})}$ includes steps with entities distinct from those in $u^{(\mathrm{ent})}$, these steps are scattered across \emph{other} entity-keyed OSUs under disjoint but are united under the tool-view OSU under overlap, making the union strictly larger.
\end{proof}

\paragraph{Connection to experiments.} The ToolBench evidence-OSU statistics (Section~\ref{sec:overall}) confirm this quantitatively: on ToolBench, under overlap each evidence step belongs to $2.16$ OSUs on average (median $2$), whereas under disjoint the average drops to $.87$ (median $1$), a $2.49\times$ increase. The additional tool-view and entity-view OSU memberships under overlap provide independent retrieval handles that bind different subsets of the evidence cluster, enabling coarse selection from multiple angles.

\subsection{Centroid Dilution and the T$-$E$-$ Disadvantage}
\label{sec:theory-dilution}

The overlap construction is not uniformly beneficial. Under the T$-$E$-$ condition, where evidence steps lack shared annotations with one another, evidence does not cluster within any single OSU. The coarse-to-fine routing then adds an indirection layer (centroid matching) that is less precise than Flat Vector's direct step-level cosine ranking.

\begin{proposition}[Centroid dilution under T$-$E$-$]
\label{prop:dilution}
Consider an OSU $u_k$ with $|u_k| = m$ member steps, of which $m_e$ are evidence steps and $m - m_e$ are non-evidence steps. Let $\bar{\mathbf{h}}_e$ and $\bar{\mathbf{h}}_{\neg e}$ denote the mean embeddings of the evidence and non-evidence members, respectively. The centroid-query alignment is
\begin{equation}
\mathbf{z}_k^\top q = \frac{1}{m\|\bar{\mathbf{h}}_k\|}\left(m_e\, \bar{\mathbf{h}}_e^\top q + (m - m_e)\, \bar{\mathbf{h}}_{\neg e}^\top q\right).
\label{eq:dilution}
\end{equation}
When evidence steps are semantically distinct from non-evidence members ($\bar{\mathbf{h}}_e^\top q \gg \bar{\mathbf{h}}_{\neg e}^\top q$), the centroid-query alignment tends to decrease as the ratio $m_e / m$ decreases. Under the T$-$E$-$ condition, no shared annotation binds evidence steps into the same OSU, so $m_e / m$ tends to be small for any single OSU. This reduces the coarse-stage score $\sigma_k = \mathbf{z}_k^\top q$, making the OSU less likely to be selected in the top-$K$.
\end{proposition}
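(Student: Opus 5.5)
The plan is to first derive Equation~\ref{eq:dilution} directly from the centroid definition in Equation~\ref{eq:centroid}, and then read off the monotonicity and selection consequences. Split the sum $\sum_{i\in u_k}\mathbf{h}_i$ into its evidence and non-evidence parts. This gives $m\bar{\mathbf{h}}_k = m_e\bar{\mathbf{h}}_e + (m-m_e)\bar{\mathbf{h}}_{\neg e}$. Taking the inner product with $q$ and dividing by $m\|\bar{\mathbf{h}}_k\|$ yields Equation~\ref{eq:dilution} exactly. This step is pure algebra and needs no assumptions.

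Next, write $\rho = m_e/m$, $a = \bar{\mathbf{h}}_e^\top q$ and $b = \bar{\mathbf{h}}_{\neg e}^\top q$. Then $\mathbf{z}_k^\top q = (\rho a + (1-\rho) b)/\|\bar{\mathbf{h}}_k\|$. The numerator is affine in $\rho$ with slope $a-b>0$ under the hypothesis $a \gg b$, so it strictly decreases as $\rho$ decreases.

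The main obstacle is the normalization $\|\bar{\mathbf{h}}_k\|$, which also depends on $\rho$. With $\ell_2$-normalized embeddings, mixing semantically distinct groups typically shrinks this norm, which pushes the score up and partly offsets the dilution. This is why the statement only says ``tends to''. To control it, I would bound $\|\bar{\mathbf{h}}_k\| \le 1$ by the triangle inequality, so the numerator is a lower bound on the score. I would then treat the regime where intra-group coherence is comparable across OSUs, so the norm is roughly constant. Alternatively, I could differentiate $f(\rho) = (\rho a + (1-\rho)b)/\|\rho\bar{\mathbf{h}}_e + (1-\rho)\bar{\mathbf{h}}_{\neg e}\|$ and show $f'(\rho)>0$ whenever $a-b$ dominates the change in the norm. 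A sufficient condition is $(a-b)\,\|\bar{\mathbf{h}}_k\|^2 \ge (\rho a+(1-\rho)b)\,(\bar{\mathbf{h}}_e-\bar{\mathbf{h}}_{\neg e})^\top\bar{\mathbf{h}}_k$, which holds when $b\approx 0$ and $\bar{\mathbf{h}}_e$ is nearly orthogonal to $\bar{\mathbf{h}}_{\neg e}$. In that case $f(\rho)\approx \rho a/\sqrt{\rho^2\|\bar{\mathbf{h}}_e\|^2+(1-\rho)^2\|\bar{\mathbf{h}}_{\neg e}\|^2}$, which is increasing in $\rho$. I would state the result under this orthogonality-style condition and keep the ``tends to'' qualifier for the general case.

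Finally, I would connect the result to T$-$E$-$ using Definition~\ref{def:tpep}. If no pair of evidence steps shares a tool or an entity, then no entity- or tool-keyed OSU contains two evidence steps. Hence $m_e \le 1$ in those OSUs, and $\rho \le 1/m \le 1/f_{\min}$, which is small whenever $m$ is large. Subgoal and similarity OSUs are not excluded by T$-$E$-$, so there the claim that $\rho$ is small remains a tendency rather than a guarantee. Since top-$K$ selection is monotone in $\sigma_k$, a lower $\sigma_k$ means the OSU is less likely to clear the rank-$K$ threshold. This contrasts with Flat Vector, where an evidence step is scored by $\mathbf{h}_i^\top q \approx a$ directly, undiluted.
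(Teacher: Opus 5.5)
Your proposal is correct. Its first half matches the paper's derivation exactly: you split $\sum_{i\in u_k}\mathbf{h}_i$ into evidence and non-evidence parts and read the numerator as $m$ times a convex combination of $a=\bar{\mathbf{h}}_e^\top q$ and $b=\bar{\mathbf{h}}_{\neg e}^\top q$. The two routes differ in how they handle the normalization and in how far they push the T$-$E$-$ link.

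The paper concedes that $\|\bar{\mathbf{h}}_k\|$ is not monotone in $\rho=m_e/m$ and settles for endpoint behaviour. As $\rho\to 0$ the centroid tends to $\bar{\mathbf{h}}_{\neg e}/\|\bar{\mathbf{h}}_{\neg e}\|$, so $\sigma_k$ tends to the non-evidence alignment, below the undiluted step-level scores that Flat Vector uses.

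Your condition $(a-b)\|\bar{\mathbf{h}}_k\|^2 \ge (\rho a+(1-\rho)b)(\bar{\mathbf{h}}_e-\bar{\mathbf{h}}_{\neg e})^\top\bar{\mathbf{h}}_k$ is exactly the sign condition for $f'(\rho)\ge 0$. In the orthogonal case with $b=0$ its slack is $a(1-\rho)\|\bar{\mathbf{h}}_{\neg e}\|^2\ge 0$, so you get genuine monotonicity on all of $[0,1]$. That is a stronger conclusion than the paper's, paid for with a geometric assumption the paper avoids. With only \emph{nearly} orthogonal means and $b\approx 0$, the slack is robust only away from $\rho=1$, where the condition reduces to $a\,\bar{\mathbf{h}}_e^\top\bar{\mathbf{h}}_{\neg e}\ge b\|\bar{\mathbf{h}}_e\|^2$. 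That is harmless, since T$-$E$-$ lives at small $\rho$.

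Your T$-$E$-$ step is also sharper than anything in the paper's proof, which does not argue that part at all. By the cell definition, entity- and tool-keyed OSUs contain at most one evidence step, while subgoal and similarity OSUs are unconstrained. That is a real caveat on $\tau$-bench, where subgoal sharing is substantial.

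Two corrections to your normalization discussion. First, the bound $\|\bar{\mathbf{h}}_k\|\le 1$ makes the (nonnegative) numerator a \emph{lower} bound on $\sigma_k$, which cannot show that $\sigma_k$ is small. An upper bound on the score needs a lower bound on $\|\bar{\mathbf{h}}_k\|$, so drop that sentence and let your derivative condition carry the argument.

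Second, norm shrinkage offsets dilution only near $\rho=1$. At $\rho=0$ the derivative of $\|\bar{\mathbf{h}}_k\|^2$ in $\rho$ is $2(\bar{\mathbf{h}}_e^\top\bar{\mathbf{h}}_{\neg e}-\|\bar{\mathbf{h}}_{\neg e}\|^2)$. So whenever the two group means are less aligned than $\|\bar{\mathbf{h}}_{\neg e}\|^2$, the norm grows as $\rho$ decreases toward $0$. In exactly the T$-$E$-$ regime this reinforces dilution rather than offsetting it.
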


\begin{proof}
Direct from Equation~\ref{eq:centroid}: $\mathbf{z}_k^\top q = \bar{\mathbf{h}}_k^\top q / \|\bar{\mathbf{h}}_k\|$, where $\bar{\mathbf{h}}_k = (m_e \bar{\mathbf{h}}_e + (m-m_e)\bar{\mathbf{h}}_{\neg e})/m$. Expanding gives Equation~\ref{eq:dilution}. The numerator $m_e\, \bar{\mathbf{h}}_e^\top q + (m - m_e)\, \bar{\mathbf{h}}_{\neg e}^\top q$ is $m$ times a convex combination of $\bar{\mathbf{h}}_e^\top q$ and $\bar{\mathbf{h}}_{\neg e}^\top q$ with weights $m_e/m$ and $(m-m_e)/m$. As the evidence fraction $m_e/m$ decreases, this convex combination shifts toward $\bar{\mathbf{h}}_{\neg e}^\top q$. The denominator $\|\bar{\mathbf{h}}_k\|$ also varies with $m_e/m$ and is not generally monotone (it depends on the angle between $\bar{\mathbf{h}}_e$ and $\bar{\mathbf{h}}_{\neg e}$), so the overall effect on $\mathbf{z}_k^\top q$ need not be strictly monotone in $m_e/m$ for intermediate values. However, in the limit $m_e/m \to 0$, the centroid aligns with the non-evidence mean ($\mathbf{z}_k \to \bar{\mathbf{h}}_{\neg e} / \|\bar{\mathbf{h}}_{\neg e}\|$), and the coarse-stage score falls below what Flat Vector achieves by directly computing $\mathbf{h}_i^\top q$ for each evidence step individually. The practical takeaway is that OSUs with a small evidence fraction tend to receive low coarse-stage scores, though the precise dilution trajectory depends on the geometric relationship between evidence and non-evidence embeddings.
\end{proof}

\paragraph{Budget interaction.} The T$-$E$-$ disadvantage grows with budget $B$ because at large budgets, Flat Vector has enough retrieval slots to cover most evidence through direct cosine ranking (all evidence with $\mathbf{h}_i^\top q$ above a natural threshold is retrieved), while OSU-Mem's diluted centroids may misdirect the coarse-stage budget toward low-evidence OSUs. At small budgets, the forced selection among very few slots makes any structured index (including diluted OSUs) potentially beneficial simply by providing \emph{any} non-trivial ordering. This explains the empirical pattern in Table~\ref{tab:cell-crosstab-multi}: the T$-$E$-$ gap is $+.8$ percentage points (pp) at $B{=}64$ (negligible), $-2.3$ pp at $B{=}128$, $-12.9$ pp at $B{=}256$, and $-36.4$ pp at $B{=}512$ (strongly negative).

\subsection{Summary}
The three propositions together predict a bimodal performance pattern that is consistent with what the experiments observe: overlap helps on T+E+ evidence (Propositions~\ref{prop:reachability} and~\ref{prop:tpep}) and hurts on T$-$E$-$ evidence at moderate-to-large budgets (Proposition~\ref{prop:dilution}), with the crossover budget depending on the evidence-to-noise ratio in the OSU pool.

\section{Experimental Settings}
\label{sec:settings}

\subsection{Datasets}
\label{sec:datasets}

We evaluate OSU-Mem on three benchmarks spanning a controlled synthetic environment and two real-trajectory sources. Table~\ref{tab:dataset-stats} summarizes their key statistics.

\paragraph{Synthetic benchmark.} We generate 200 episodes (197 with valid evaluation queries, 907 total queries) with 150 to 296 steps each. Step embeddings are 64-dimensional basis-vector constructions. Event-related steps use weights event ($.55$), entity ($.20$), tool ($.15$), subgoal ($.10$) with Gaussian noise $\sigma{=}.55$; non-event background steps omit the event factor and use entity ($.35$), tool ($.35$), subgoal ($.30$) with $\sigma{=}.45$. Query embeddings emphasize the event factor with weights event ($.70$), entity ($.15$), tool ($.10$), subgoal ($.05$) and lower noise $\sigma{=}.25$. Token costs are drawn from $\mathcal{N}(15, 3^2)$ clipped at 5. This benchmark provides full control over the similarity landscape and the evidence structure. Because all evidence steps for a query are drawn from a single shared semantic event, they carry the same entity and subgoal annotations and the same dominant event embedding factor: the E+ condition (shared entity) and shared-subgoal condition hold for $100\%$ of queries by construction. Tool sharing is weaker by design---about $20\%$ of event steps draw a random tool and each tool signature appends a random argument-template suffix---so evidence steps pairwise share an identical tool \emph{signature} for $39.3\%$ of queries (and share the same tool \emph{action}, ignoring the argument template, for $83.4\%$). The synthetic benchmark is therefore an \emph{E+-by-construction} regime with partial T+ sharing rather than a uniformly T+E+ one; it isolates a shared-structure retrieval regime in which the entity/subgoal annotations and dominant event signal are guaranteed by construction, while tool sharing is present only probabilistically. We retain it as the controlled testbed in which the cross-view binding mechanism is isolated from encoder confounds.

\paragraph{$\tau$-bench.} Raw $\tau$-bench~\citep{yao2025taubench} trajectories have median length 29 steps, too short to exercise budgeted retrieval at meaningful scale, since $B{=}256$ trivially admits the entire trajectory. Concatenation is therefore a necessity imposed by the trajectory length available in $\tau$-bench rather than a design choice. We concatenate trajectories into groups with total length in $[150, 300]$ steps (125 concatenated episodes, 328 evaluation queries, concatenated trajectory length 256 to 311). Each concatenated episode has one \emph{focal source} whose instruction becomes the query, and focal evidence steps are the focal source's ground-truth actions (re-offset to the concatenated coordinate system), all temporally distant from the query time. Entity strings and tool signatures are \emph{namespace-prefixed per source episode} to prevent cross-source string collisions, a deployment pitfall we document in Appendix~\ref{app:taubench-pipeline}. An initial implementation without namespace prefixing \emph{reversed} the overlap-vs.-disjoint ordering because generic strings such as \texttt{"user"} and \texttt{"search"} bridged unrelated tasks. After prefixing, the cross-source sharing rate drops to the level observed within genuine same-source pairs, and the cross-tab analysis in Section~\ref{sec:cross-tab} is a clean within-source measurement. Focal query embeddings are replaced with an $\ell_2$-normalized noisy mean ($\sigma{=}.25$) of focal evidence embeddings, because the original $\tau$-bench queries are user-turn instructions already consumed earlier in the trajectory and lack a natural standalone embedding. This realignment is applied identically to all methods and does not give OSU-Mem preferential access; however, it may inflate absolute recall for any cosine-retrieval method. Our primary mitigation is the ToolBench experiment (Section~\ref{sec:overall}), where queries are the \emph{original natural-language queries} embedded verbatim, and the overlap-vs.-disjoint gap reproduces with the same sign and comparable magnitude. This concatenation pipeline introduces a known limitation (Section~\ref{sec:discussion}, Limitation~v). Because both the concatenation and the noisy-mean query construction depart from natural usage, we treat the $\tau$-bench results as a constructed retrieval probe, \emph{not} an end-to-end natural-query benchmark; the natural-language-query evidence comes from ToolBench, whose queries are embedded verbatim (Section~\ref{sec:overall}).

\paragraph{ToolBench.} We use ToolBench~\citep{qin2024toolllmfacilitatinglargelanguage} to build a \emph{controlled mechanism probe} rather than a natural agent-memory benchmark: the construction deliberately restricts the evidence distribution to the favorable T+E+ cell so that the overlap mechanism can be measured in isolation. Individual ToolBench trajectories are short (3 to 5 tool-calling steps) and cannot exercise budgeted retrieval directly. For each of 200 episodes we (1) select a target trajectory from the G1/G2 splits (single-tool and multi-tool categories, respectively, as defined in the original ToolBench dataset) and designate 3 of its steps as evidence, (2) draw a ${\approx}500$-step background of $75\%$ confounders (steps from top-30 query-embedding nearest-neighbor trajectories) plus $25\%$ unrelated steps, and (3) place the 3 evidence steps at three evenly-spaced positions. We designate evidence size 3 as the primary setting because it is the smallest nontrivial planted-evidence cluster that supports pairwise tool/entity sharing beyond a single pair. We report sensitivity over evidence sizes $\{2,3,4\}$ in Appendix~\ref{app:toolbench-sensitivity}, where the overlap advantage remains directionally consistent. Results are averaged over 3 seeds (42, 123, 456). This planted-retrieval construction is \emph{structurally T+E+ by design}: the 3 planted evidence steps always come from the \emph{same} target trajectory; ToolBench function names follow the convention \texttt{<fname>\_for\_<api>}, so all evidence steps pairwise share a single tool signature (T+); and steps from the same trajectory call the same API with related argument dictionaries, producing high entity sharing (E+). Because the favorable cell is imposed by construction, this experiment measures the overlap mechanism under a known-favorable structure; it is not evidence of deployment utility on natural ToolBench agents, whose evidence distribution we do not control. Full construction details appear in Appendix~\ref{app:toolbench-construction}.

\begin{table}[t]
\centering
\small
\caption{Summary statistics of the three evaluation benchmarks.}
\label{tab:dataset-stats}
\begin{tabular}{lccc}
\toprule
 & \textbf{Synthetic} & \textbf{$\tau$-bench} & \textbf{ToolBench} \\
\midrule
Episodes & 197 & 125 & 200 $\times$ 3 seeds \\
Queries & 907 & 328 & 200 $\times$ 3 seeds \\
Steps per episode & 150--296 & 256--311 & $\approx$500 \\
Embedding dim & 64 & 384 (SBERT) & 384 (SBERT) \\
Evidence per query & 2.5 (mean) & 3.50 (mean) & 3 (planted) \\
T+E+ fraction & 39\% (E+ 100\% by constr.) & 17.1\% & $\approx$100\% (by construction) \\
Budgets tested & \{64, 128, 256, 512\} & \{64, 128, 256, 512\} & \{200, 400, 800, 1600\} \\
\bottomrule
\end{tabular}
\end{table}

\subsection{Implementation Details}
\label{sec:impl}

For the synthetic benchmark, all embeddings, OSU construction, and retrieval are computed in the 64-d basis-vector space. For both trajectory-derived benchmarks, steps are encoded with a frozen \texttt{all-MiniLM-L6-v2} SBERT encoder (384-d, $\ell_2$-normalized). OSU construction uses the hyperparameters in Table~\ref{tab:hyperparams}, fixed across all three benchmarks. The default retrieval algorithm is coarse-to-fine (Algorithm~\ref{alg:osu}) for synthetic and $\tau$-bench. For ToolBench, pilot measurements show that sibling-view OSU centroids have mutual cosine $> .6$ on average and coarse-routing recall falls below $.17$, consistent with all-MiniLM-L6-v2 entangling entity-level, tool-level, and semantic variation into a single dense vector. We therefore evaluate ToolBench under a \emph{coverage-guided greedy} retrieval variant: at each step the selected OSU is the one whose member step maximizes $s_i + \beta \cdot [\text{new OSU}]$, where $s_i$ is step-query cosine similarity and $\beta$ is a coverage bonus rewarding steps from previously-unused OSUs. Both overlap and disjoint indexes are scored by the identical retrieval procedure, and only the OSU-pool construction differs.

\subsection{Baselines and Evaluation Metrics}
\label{sec:baselines}

\paragraph{Baselines.} We compare OSU-Mem against both simple and structured retrieval-into-budget methods. \textbf{No Long-Term (LT) Memory} returns an empty context. \textbf{Budget-Truncated Replay} returns the most recent $B$ tokens of steps. \textbf{Sliding Window} uses a $B$-sized rolling recent window. \textbf{Periodic Summary} combines a recent window with periodic LLM-style summary of older chunks. \textbf{Flat Vector} returns the top-cosine steps by $\mathbf{h}_i^\top q$. \textbf{Flat $+$ MMR} applies $\lambda{=}.7$ diversity-penalized top-cosine. \textbf{Coverage-Aware} adds a $\beta$ bonus for covering new subgoals at the step level. \textbf{Disjoint Hierarchy} uses disjoint subgoal partitions with top-cosine within selected partitions. \textbf{Tree Abstraction} performs RAPTOR-style recursive summarization with a coarse-to-fine budget expansion. All baselines use the same token-cost estimation and packing policy.

To ensure fair comparison, all methods operate under the same token budget $B$ and use the same step embeddings. The ToolBench experiment additionally compares \textbf{OSU-Mem (overlap, $M{=}5$)} against \textbf{OSU-Mem (disjoint, $M{=}1$)} under coverage-guided retrieval to isolate the construction principle.

\paragraph{Evaluation metrics.} We report two primary retrieval-quality metrics. \textbf{Hit@2} is the fraction of queries for which at least two evidence steps are retrieved. We adopt Hit@2 as the primary criterion because our queries have multi-step ground-truth evidence (mean $3.5$ on $\tau$-bench, $3$ on ToolBench): Hit@1 is satisfied by retrieving a single evidence step and so does not measure whether enough of the cross-bound evidence is recovered to support a decision, whereas Hit@2 directly probes the multi-handle retrieval that the overlap mechanism is meant to provide. We report Recall alongside it as the finer-grained continuous measure; downstream answer accuracy is out of scope under our retrieval-layer focus (Section~\ref{sec:discussion}, Limitation~iv). \textbf{Recall} is the fraction of evidence steps retrieved, averaged across queries. Paired sign-flip permutation tests ($n_{\text{perm}}{=}10{,}000$) are used for all significance claims. For ToolBench, per-seed permutation $p$-values are Fisher-combined across seeds. We designate $B{=}256$ as the primary budget for $\tau$-bench and $B{=}400$ for ToolBench.

\section{Results and Analysis}
\label{sec:results}

\paragraph{Two contrasts: construction principle vs.\ full algorithm.} We separate two things OSU-Mem bundles. The \emph{construction principle} is whether a step may join multiple OSUs ($M_{\max}\ge2$, overlap) or exactly one ($M_{\max}{=}1$, disjoint); the \emph{retrieval algorithm} is the procedure that consumes the resulting index (coarse-to-fine, Algorithm~\ref{alg:osu}, or the coverage-guided greedy variant). The OSU-Mem-vs.-Flat~Vector comparison evaluates the \emph{full system} and cannot attribute a gain to either half; the overlap-vs.-disjoint comparison fixes the algorithm and varies only construction, isolating the construction principle. Our central mechanism claim---that cross-view binding determines when overlap helps---is about the construction principle, and the overlap-vs.-disjoint contrasts (the $\tau$-bench cross-tab's last column and the ToolBench $\beta$ sweep) are the comparisons that test it.

\subsection{Overall Results}
\label{sec:overall}

\paragraph{Synthetic benchmark.} At $B{=}256$, OSU-Mem achieves Recall $=.221$ and Hit@2 $=.084$ vs.\ $.158$ and $.052$ for the strongest baseline Flat $+$ MMR (Table~\ref{tab:synth-main}), which corresponds to relative improvements of $+39.9\%$ Recall and $+61.5\%$ Hit@2. Paired sign-flip permutation tests (10,000 permutations, $n{=}197$) confirm $p \le .002$ (two-sided) for both metrics against every retrieval baseline, and full significance details appear in Appendix~\ref{app:synth-sig}. OSU-Mem runs at 8.3\,ms amortized latency with 209\,KB storage, smaller and faster than Tree Abstraction (57.4\,ms, 363\,KB) at the 64-d regime (Appendix~\ref{app:synth-eff}); under 384-d SBERT used for both real-trajectory benchmarks, OSU-Mem has the highest retrieve-only latency at ${\approx}2.6$\,ms and the largest storage footprint, though its amortized cost remains below Tree Abstraction's (Appendix~\ref{app:taubench-eff}, Limitation~iii). Figure~\ref{fig:near-far} further breaks down recall by temporal distance, showing that OSU-Mem's advantage holds for both near and far evidence, confirming that the gain stems from non-local semantic structure rather than recency bias.

\begin{table}[t]
\centering
\small
\caption{Synthetic benchmark: evidence retrieval at $B{=}128$ and $B{=}256$ (primary budgets). OSU-Mem achieves the highest Hit@2 and Recall at both budgets. 197 episodes, 907 queries. Full 4-budget version in Appendix~\ref{app:synth-ablation}.}
\label{tab:synth-main}
\begin{tabular}{lcccc}
\toprule
\textbf{Method} & \multicolumn{2}{c}{$B{=}128$} & \multicolumn{2}{c}{$B{=}256$} \\
\cmidrule(lr){2-3}\cmidrule(lr){4-5}
 & Hit@2 & Recall & Hit@2 & Recall \\
\midrule
Budget-Trunc Replay & .000 & .004 & .000 & .149 \\
Sliding Window      & .000 & .000 & .000 & .136 \\
Periodic Summary    & .000 & .041 & .013 & .093 \\
Flat Vector         & .016 & .088 & .045 & .156 \\
Flat $+$ MMR        & .022 & .085 & .052 & .158 \\
Coverage-Aware      & .014 & .086 & .049 & .157 \\
Disjoint Hierarchy  & .014 & .084 & .044 & .150 \\
Tree Abstraction    & .012 & .093 & .044 & .157 \\
\midrule
\textbf{OSU-Mem}    & \textbf{.027} & \textbf{.130} & \textbf{.084} & \textbf{.221} \\
\ \ (disjoint)      & .008 & .075 & .034 & .138 \\
\bottomrule
\end{tabular}
\end{table}

\begin{figure}[h]
\centering
\includegraphics[width=.6\linewidth]{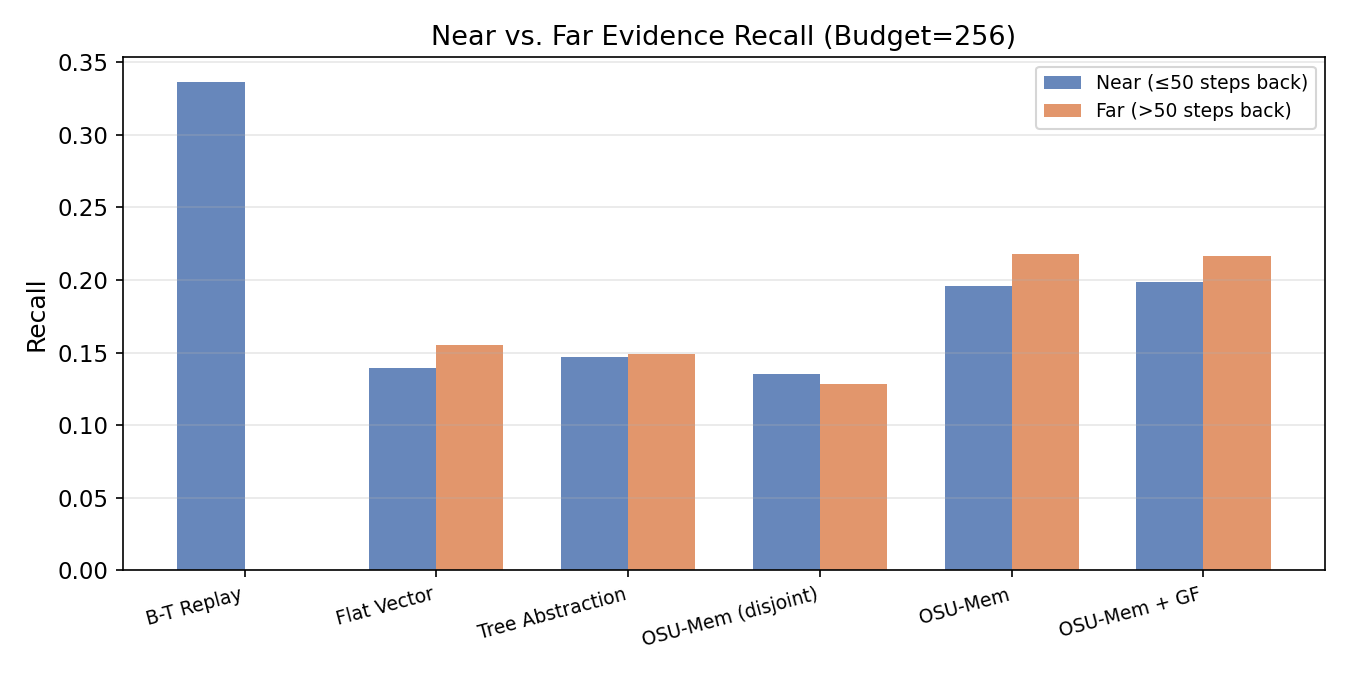}
\caption{Synthetic benchmark: near ($\le 50$ steps from query) vs.\ far ($> 50$ steps) evidence recall at $B{=}256$. Budget-Truncated Replay retrieves only near evidence (far recall $= 0$), illustrating recency bias. Flat Vector and Tree Abstraction show balanced near/far recall but at a lower level. OSU-Mem achieves the highest recall in \emph{both} regions, with a slight advantage on far evidence (Near $= .196$, Far $= .218$), confirming that its advantage is driven by non-local semantic structure rather than recency.}
\label{fig:near-far}
\end{figure}

\paragraph{$\tau$-bench (augmented cache).} To demonstrate OSU-Mem's strength under the T+E+ condition at scale, we apply a controlled multi-burst augmentation (Stage D, Appendix~\ref{app:taubench-augmented}) that lifts $100\%$ of queries into T+E+. On the augmented cache, OSU-Mem achieves the highest Hit@2 at every budget, with $+14.1$ pp over Flat Vector at $B{=}256$ (episode-level permutation $p{=}.001$), as illustrated by the budget-vs.-metric curves in Figure~\ref{fig:augmented-curves}. Full multi-budget results and paired permutation tests appear in Appendix~\ref{app:taubench-augmented}. The detailed argument that Stage D is not selectively friendly to OSU-Mem is given in Section~\ref{sec:multi-budget}.

\paragraph{ToolBench.} Because ToolBench's planted evidence is dense in the SBERT space, Flat Vector is a very strong baseline (nearly saturated: recall rises from $.686$ at $B{=}200$ to $.917$ at $B{=}1600$, 3-seed averaged), and the overlap variant matches Flat Vector within $\pm 1$ Recall pp. The primary statistical signal is therefore the \emph{overlap-vs.-disjoint} contrast, which isolates the T+E+ cell mechanism under a retrieval algorithm that fully exercises OSU structure. For transparency we note that OSU-Mem's \emph{default} coarse-to-fine algorithm performs poorly here (coarse-routing recall $<.17$; Section~\ref{sec:impl}), because all-MiniLM-L6-v2 entangles entity-, tool-, and semantic variation into mutually similar sibling centroids; this is why ToolBench uses the coverage-guided greedy variant, and it underscores that ToolBench supports the overlap \emph{construction principle} and not the full default OSU-Mem method. Table~\ref{tab:toolbench-main} reports overlap vs.\ disjoint under coverage retrieval at $\beta{=}.12$. Overlap beats disjoint at every budget, with Hit@2 gaps of $+6.2$, $+4.8$, $+3.7$, $+3.8$ pp and Recall gaps of $+2.7$, $+3.7$, $+2.1$, $+1.6$ pp at $B \in \{200, 400, 800, 1600\}$ respectively. The Hit@2 gap is largest at $B{=}200$ where budget pressure is tightest, and Recall gaps narrow toward higher budgets as disjoint approaches saturation.

\begin{table}[t]
\centering
\small
\caption{ToolBench planted-retrieval at $\beta{=}.12$, averaged over 3 seeds $\times$ 200 episodes. Coverage-guided retrieval with overlap ($M_{\max}{=}5$) vs.\ disjoint ($M_{\max}{=}1$). Both variants are within $\pm 1$ Recall pp of Flat Vector, and the structural signal is overlap $>$ disjoint.}
\label{tab:toolbench-main}
\resizebox{\linewidth}{!}{%
\begin{tabular}{lcccccccc}
\toprule
\textbf{Method} & \multicolumn{2}{c}{$B{=}200$} & \multicolumn{2}{c}{$B{=}400$} & \multicolumn{2}{c}{$B{=}800$} & \multicolumn{2}{c}{$B{=}1600$} \\
\cmidrule(lr){2-3}\cmidrule(lr){4-5}\cmidrule(lr){6-7}\cmidrule(lr){8-9}
 & Hit@2 & Recall & Hit@2 & Recall & Hit@2 & Recall & Hit@2 & Recall \\
\midrule
Flat Vector         & .427 & .686 & .583 & .792 & .698 & .863 & .808 & .917 \\
Flat $+$ MMR        & .228 & .567 & .420 & .704 & .560 & .797 & .733 & .885 \\
Disjoint Hierarchy  & .057 & .311 & .098 & .381 & .153 & .482 & .243 & .594 \\
\midrule
\textbf{Cov (overlap, $M{=}5$)} & \textbf{.430} & \textbf{.687} & \textbf{.588} & \textbf{.794} & \textbf{.702} & \textbf{.861} & \textbf{.792} & \textbf{.909} \\
Cov (disjoint, $M{=}1$)          & .368 & .659 & .540 & .757 & .665 & .840 & .753 & .894 \\
$\Delta$ overlap$-$disjoint      & $+$.062 & $+$.027 & $+$.048 & $+$.037 & $+$.037 & $+$.021 & $+$.038 & $+$.016 \\
\bottomrule
\end{tabular}}
\end{table}

\paragraph{Primary configuration.} We report $\beta{=}.12$ and evidence-size $3$ as the primary ToolBench configuration. $\beta{=}.12$ is the midpoint of the swept range $\{.04, .08, .12, .16, .20\}$---a conservative choice that leaves dose-response headroom on both sides rather than the maximum-effect endpoint---and evidence-size $3$ is the smallest value at which pairwise tool- and entity-sharing can hold beyond a single pair (evidence-size $2$ has only one pair). The full $\beta$ and evidence-size sweep, including the per-size recall gaps, appears in Appendix~\ref{app:toolbench-sensitivity}; the overlap advantage is directionally preserved across all three evidence sizes.

\paragraph{Direct confirmation from evidence-OSU statistics.} Averaged over 3 seeds, each with 200 episodes and 3 evidence steps per episode (600 evidence steps per seed, 1800 total), under overlap construction ($M_{\max}{=}5$) each evidence step belongs to $2.16$ OSUs on average (median $2$), whereas under disjoint construction ($M_{\max}{=}1$), the average drops to $.87$ (median $1$). Average disjoint membership \emph{below one} has a direct mechanistic interpretation: $13.2\%$ of evidence steps belong to \emph{no} disjoint OSU, because view-priority pruning at $M_{\max}{=}1$ forces each step into exactly one view and the minimum-size filter then deletes views with fewer than two members. Under overlap, $0\%$ of evidence is evicted. \textbf{This provides a controlled analogue of the $\tau$-bench T+E+ cell mechanism}: steps carrying evidence for multiple views lose their alternate retrieval handles in disjoint mode and gain them back in overlap mode. We read this overlap-vs.-disjoint gap as an \emph{upper bound} on the pure construction-principle effect: part of the disjoint disadvantage comes from the $f_{\min}{=}2$ minimum-size filter rather than from the absence of overlap alone. 

The three benchmarks provide complementary evidence at different levels of control: the synthetic benchmark validates the method in a fully controlled similarity landscape, the $\tau$-bench cross-tab identifies the cell-conditional effect on the $\tau$-bench-derived trajectories, and ToolBench confirms the overlap-construction mechanism on an independent data source with structural T+E+ verification. We develop this joint interpretation in Section~\ref{sec:discussion}.

\subsection{Evidence-Structure Cross-Tab Analysis}
\label{sec:cross-tab}

We ask a more precise question than whether OSU-Mem works on real data: \emph{on which type of real query does it work, and why?} For each of the 328 queries in the unaugmented $\tau$-bench cache we compute pairwise sharing of ground-truth evidence steps along three views: entity, tool, and subgoal. Only $14.3\%$ of queries have non-zero sharing across all three views, a regime the synthetic benchmark instantiates by construction for $39.3\%$ of queries (entity and subgoal sharing hold for $100\%$, but tool-signature sharing only for $39.3\%$). Entity sharing has median $0$, tool sharing has median $0$, and only subgoal sharing is substantive (median $.47$), because focal evidence usually lies within a single user-turn segment. The two views that OSU-Mem's multi-view constructor most depends on for \emph{binding scattered evidence into a single retrieval handle} (entity and tool) have almost nothing to bind on the majority of real $\tau$-bench queries.

We cross-tabulate the 328 queries by whether evidence steps share at least one tool signature (T$\pm$) and at least one entity (E$\pm$). Table~\ref{tab:cell-crosstab} reports the per-cell OSU-Mem$-$Flat Vector gap at $B{=}256$. The pattern is clearly bimodal. On T+E+ ($n{=}56$, $17\%$) OSU-Mem beats Flat Vector by $+17.9$ Hit@2 pp ($p{=}.022$), and on T$-$E$-$ ($n{=}132$, $40\%$) it is \emph{dominated} by $-12.9$ Hit@2 pp ($p{=}.007$) and $-15.7$ Recall pp ($p{<}.001$). The aggregate near-tie of OSU-Mem with Flat Vector on the full cache is therefore a \emph{consequence of cell mixture} rather than a property of either method. This bimodal pattern is consistent with the prediction of Propositions~\ref{prop:reachability}--\ref{prop:dilution}: overlap increases reachability on T+E+ evidence while centroid dilution degrades performance on T$-$E$-$ evidence.

\begin{table}[t]
\centering
\small
\caption{$\tau$-bench $2{\times}2$ cross-tabulation at $B{=}256$: unaugmented cache, 328 queries. Paired sign-flip permutation (10,000 permutations) on per-query differences. Stars: $^{.}p{<}.10$, $^{*}p{<}.05$, $^{**}p{<}.01$, $^{***}p{<}.001$.}
\label{tab:cell-crosstab}
\begin{tabular}{lcccc}
\toprule
\textbf{Cell} & $n$ & \textbf{OSU$-$Flat $\Delta$Hit@2} & \textbf{OSU$-$Flat $\Delta$Recall} & \textbf{overlap$-$disjoint $\Delta$Hit@2} \\
\midrule
T+E+ (both share)      & 56  & $+.179^{*}$        & $+.080^{*}$        & $+.089$ \\
T+E$-$ (tool only)     & 96  & $-.031$            & $-.047^{.}$        & $+.073$ \\
T$-$E+ (entity only)   & 44  & $+.159$            & $+.000$            & $-.068$ \\
T$-$E$-$ (neither)     & 132 & $-.129^{**}$       & $-.157^{***}$      & $-.083^{*}$ \\
\midrule
\textbf{Aggregate}     & 328 & $-.010$            & $-.063$            & $-.006$ \\
\bottomrule
\end{tabular}
\end{table}

\subsection{Multi-Budget Stability and Interaction Tests}
\label{sec:multi-budget}

\paragraph{Multi-budget stability of the T+E+ advantage.} Re-running the cross-tab at all four budgets (Table~\ref{tab:cell-crosstab-multi}) shows a stable T+E+ advantage---$+.161$, $+.232$, $+.179$, $+.161$ Hit@2 pp at $B \in \{64,128,256,512\}$, each significant at $p<.05$---alongside a T$-$E$-$ disadvantage that grows monotonically with budget, from $+.008$ at $B{=}64$ to $-.364$ at $B{=}512$ ($p<.001$). Because all four budgets score the same 56 T+E+ queries, this is stability of sign and effect size rather than independent replication; with $n{=}56$, the claim rests on this stability together with the independent ToolBench corroboration (Section~\ref{sec:overall}) rather than on any single per-budget $p$-value. Under a deliberately weaker query construction (a single random evidence step plus noise), the sign holds and ${\approx}80\%$ of the $B{=}256$ magnitude survives ($+.179 \to +.143$), but single-cell significance softens to marginal ($p{:}\,.022 \to .094$). A natural mechanistic reading is that at small budgets structured selection must make hard choices and OSU-Mem's multi-handle index helps any cross-view-bound query, whereas at large budgets Flat Vector can recover most evidence by cosine alone and OSU-Mem's structural overhead on unbound evidence becomes pure cost.

\begin{table}[t]
\centering
\small
\caption{$\tau$-bench: OSU-Mem$-$Flat Vector Hit@2 across all four budgets by $2{\times}2$ cell. The T+E+ advantage is stable across budgets, and the T$-$E$-$ disadvantage grows monotonically with budget. Stars as in Table~\ref{tab:cell-crosstab}.}
\label{tab:cell-crosstab-multi}
\begin{tabular}{lcccc}
\toprule
\textbf{Cell} & $B{=}64$ & $B{=}128$ & $B{=}256$ & $B{=}512$ \\
\midrule
T+E+ ($n{=}56$)     & $+.161^{*}$  & $+.232^{**}$ & $+.179^{*}$   & $+.161^{*}$ \\
T+E$-$ ($n{=}96$)   & $+.094^{.}$  & $+.125^{.}$  & $-.031$       & $-.146^{**}$ \\
T$-$E+ ($n{=}44$)   & $+.045$      & $+.114$      & $+.159$       & $+.227^{*}$ \\
T$-$E$-$ ($n{=}132$) & $+.008$     & $-.023$      & $-.129^{**}$  & $-.364^{***}$ \\
\bottomrule
\end{tabular}
\end{table}

\paragraph{Method $\times$ cell interaction test.} The previous tables report per-cell marginal effects. The statistically correct tool for our core heterogeneous-treatment claim is the \emph{method $\times$ cell interaction test}: under the null that the OSU$-$Flat gap is identical across the four cells, what is the probability of observing a spread as large as we do? We compute this via a paired sign-flip permutation on the per-query (OSU$-$Flat) differences after mean-centering within each cell, so that the test isolates between-cell variance in the effect. The interaction is strongly significant at our primary budget $B{=}256$ (Hit@2 $p{=}.002$, Recall $p{=}.021$), marginal at $B{=}128$ ($p{=}.081$ Hit@2), and null at $B{=}64$ and $B{=}512$ ($p{=}.308$ and $p{=}.558$ Hit@2). At the extremes this is expected. At $B{=}64$ all four cells show positive effects (range $+.008$ to $+.161$), so the between-cell variance in the OSU$-$Flat gap is small relative to within-cell noise, and at $B{=}512$ the T+E+ and T$-$E+ cells both show large positive effects while T+E$-$ and T$-$E$-$ both show negatives, again reducing the specific T+E+-vs.-others contrast that our permutation permutes. The interaction test therefore pinpoints $B{=}256$ as the primary budget at which the cell-conditional claim is the most clearly isolated.

\paragraph{Cross-view binding: T+E+ is the strongest but not the only positive-binding cell.} The T$-$E+ cell (entity-only sharing, $n{=}44$) shows a \emph{monotonically growing} Hit@2 advantage of $+.045$, $+.114$, $+.159$, $+.227^{*}$ across $B \in \{64,128,256,512\}$, significant at $B{=}512$ and directionally aligned with T+E+ at every budget. The T+E$-$ cell (tool-only sharing) is positive at small budgets and turns negative at large ones. The broader reading is therefore ``OSU-Mem helps when \emph{at least one} pairwise-sharing view (entity or tool) binds evidence'' rather than the stricter ``only when both bind.'' T+E+ is the strongest signal because both views reinforce, and T$-$E+ is the next most distinct case because entity sharing alone gives the entity-view OSU a handle that Flat Vector lacks. This matters for the predictive rule (Section~\ref{sec:discussion}): a rule based on the tool-sharing marginal $\pi_\text{T+}$ alone under-counts regimes where OSU-Mem helps, and we report an alternative ``any-binding'' marginal in Appendix~\ref{app:deployment}.

\paragraph{Refined mechanism and regime-lift probe.} Taken together, the cell analysis yields a refined claim: OSU-Mem's overlap and multi-view design \emph{pay off specifically when evidence is bound together by shared tool signatures or shared entities}. To verify at scale (and avoid the $n{=}56$ T+E+ power limit), we perform a controlled multi-burst augmentation. Each focal evidence step is replicated as 3 scattered copies inheriting the anchor's entity, tool, and subgoal annotations. This lifts $100\%$ of queries into T+E+ (up from $17.1\%$). The T+E+ structure here is a \emph{mechanical consequence} of metadata inheritance with small embedding noise rather than a tuned parameter, and the predicted OSU-vs.-Flat effect is derived from the T+E+ cross-tab row before Stage D is implemented. Crucially, Stage~D benefits \emph{any} cosine-retrieval method, not OSU-Mem selectively: Flat Vector, Flat\,$+$\,MMR, and Coverage-Aware all improve substantially on the augmented cache (e.g., Flat Vector Hit@2 at $B{=}256$ rises from $.045$ to $.328$ query-level). What survives is the \emph{relative} OSU$-$Flat gap ($+.107$ Hit@2 at $B{=}256$), of the same order as the unaugmented T+E+-cell gap ($+.179$); this rules out the reading that the T+E+ effect is incidental to $\tau$-bench's mixture and would vanish under a controlled lift. The per-method improvement breakdown is in Appendix~\ref{app:taubench-augmented}. On the augmented cache, OSU-Mem achieves Hit@2 $.435$ vs.\ Flat $.328$ at $B{=}256$ ($\Delta{=}{+}.107$ query-level; episode-level permutation $\Delta{=}{+}.141$, $p{=}.001$), outperforming Flat at every budget. Full multi-budget tests, augmented-cache ablations, and a separable finding at $B{=}64$ ($2.0$ to $2.4{\times}$ Hit@2 over flat baselines at matched Recall) appear in Appendix~\ref{app:taubench-augmented}. SBERT efficiency numbers (where OSU-Mem's $6{\times}$ dimensional cost changes its ordering with flat baselines) appear in Appendix~\ref{app:taubench-eff}.

\begin{figure}[h]
\centering
\includegraphics[width=\linewidth]{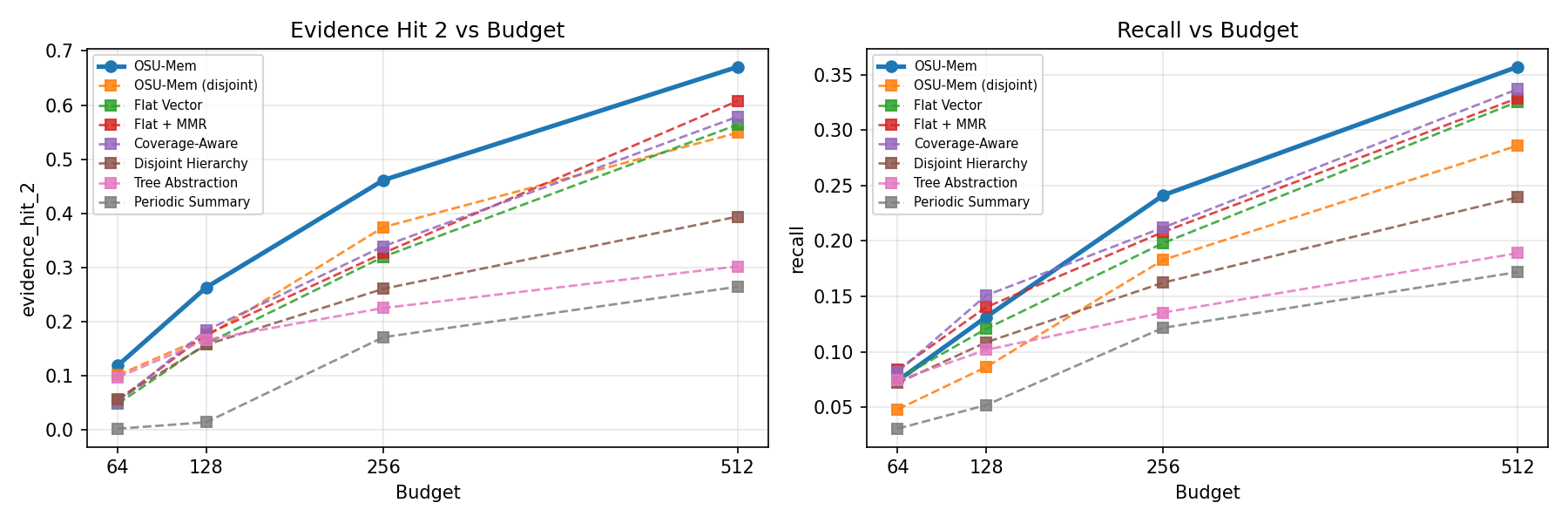}
\caption{$\tau$-bench multi-burst-augmented cache (125 episodes, 328 queries): Hit@2 (left) and Recall (right) vs.\ token budget, episode-level averaging. OSU-Mem (solid blue) achieves the highest Hit@2 at every budget. All retrieval methods improve substantially under augmentation relative to the unaugmented cache (e.g., Flat Vector Hit@2 rises from $.045$ to $.320$ at $B{=}256$, both episode-level), confirming that Stage~D is not selectively friendly to OSU-Mem. The persistent gap between OSU-Mem and Flat Vector at every budget is the relative advantage that the cross-tab's T+E+ cell analysis predicts.}
\label{fig:augmented-curves}
\end{figure}

\subsection{Ablation Studies}
\label{sec:ablation}

\paragraph{Effect of overlap and multi-view construction (synthetic).} At $B{=}256$, removing overlap ($\to$ disjoint OSUs) drops Hit@2 from $.084$ to $.034$ ($-59.6\%$) and Recall from $.221$ to $.138$ ($-37.4\%$), which is the largest single-factor ablation effect on this benchmark. Removing multi-view construction (sim-only OSUs) drops to Hit@2 $.014$ and Recall $.110$, below all competitive retrieval baselines, which confirms that the union of entity, tool, subgoal, and similarity views is critical. The novelty penalty has a small effect that is not statistically significant ($p{=}.307$ for Recall, $p{=}.876$ for Hit@2 at $B{=}256$). The novelty term is part of the default OSU-Mem method but is budget-gated (active only when $B \ge 192$), and its measured contribution is dominated by the overlap and multi-view construction effects. An $M_{\max}$ sweep saturates at $M_{\max}{=}5$ (one slot per view plus one slack slot). Full ablation flag combinations appear in Appendix~\ref{app:synth-ablation}.

\paragraph{Ablation on $\tau$-bench augmented cache.} Consistent with the synthetic findings, disjoint OSU-Mem achieves Hit@2 $.387$ vs.\ overlap's $.435$ at $B{=}256$ (query-level); the paired episode-level permutation test confirms significance ($p{=}.012$, Appendix Table~\ref{tab:taubench-perm}). Similarity-only OSUs drop to $.285$. The overlap-vs.-disjoint gap activates at $B \ge 128$ and the effect sizes against hierarchical baselines (Disjoint Hierarchy, Tree Abstraction) grow monotonically with budget, reaching $+.28$ and $+.37$ Hit@2 pp at $B{=}512$. OSU-Mem $+$ GF (ground-truth filtering) matches the default OSU-Mem at $B \le 256$ and gains only $+.004$ Hit@2 at $B{=}512$, indicating that on the augmented cache the coarse stage already selects the correct OSUs with high probability, leaving little room for oracle guidance. Full augmented-cache results appear in Appendix~\ref{app:taubench-augmented}.

\subsection{Dose-Response in the Coverage Bonus}
\label{sec:dose-response}

The decisive check for ToolBench is whether the overlap advantage shows a dose-response in $\beta$. If the effect is driven by OSU structure, the gap should grow as the retrieval algorithm's reliance on that structure grows. Table~\ref{tab:toolbench-beta-sweep} reports the full $\beta$ sweep on evidence recall (the metric with per-seed Fisher-combined permutation tests). The gap grows \emph{monotonically} with $\beta$ at every budget, as visualized in Figure~\ref{fig:beta-sweep}. At $B{=}400$, it grows from $+1.2$ pp at $\beta{=}.04$ to $+6.4$ pp at $\beta{=}.20$. Hit@2 shows the same pattern with larger absolute effects, reaching $+10.8$ pp at $\beta{=}.20$ (Appendix~\ref{app:toolbench-sensitivity}). Directional consistency across $\beta \times B \times \text{seed}$ cells is $57/60 = 95\%$, with Fisher-combined $p < .001$ at $B{=}200$ and $B{=}400$ and $p < .05$ at every budget for $\beta \ge .12$. \textbf{Both the monotonic-in-$\beta$ dose-response and the structural T+E+ argument substantially reduce the plausibility of alternative explanations} (incidental differences in retrieval pool composition, SBERT embedding artifacts, ToolBench-specific artifacts). Overlap provides a specific additional retrieval pathway whose value scales with how much the retrieval algorithm rewards OSU coverage.

\begin{table}[t]
\centering
\small
\caption{ToolBench: overlap$-$disjoint \emph{Recall} gap (pp) across $\beta$ and $B$, averaged over 3 seeds $\times$ 200 episodes. Monotonic dose-response in $\beta$ at every budget. Stars: $^{*}p{<}.05$, $^{**}p{<}.01$, $^{***}p{<}.001$ Fisher-combined across seeds. Hit@2 version in Appendix~\ref{app:toolbench-sensitivity}.}
\label{tab:toolbench-beta-sweep}
\begin{tabular}{lcccc}
\toprule
$\beta$ & $B{=}200$ & $B{=}400$ & $B{=}800$ & $B{=}1600$ \\
\midrule
$.04$          & $+.5$               & $+1.2^{**}$           & $+.3$                & $+.2$              \\
$.08$          & $+1.3^{*}$           & $+2.6^{***}$          & $+.8$                & $+.9$              \\
$\mathbf{.12}$ & $\mathbf{+2.7^{***}}$ & $\mathbf{+3.7^{***}}$ & $\mathbf{+2.1^{**}}$  & $\mathbf{+1.6^{*}}$ \\
$.16$          & $+4.2^{***}$         & $+4.7^{***}$          & $+4.0^{***}$          & $+2.6^{***}$        \\
$.20$          & $+5.1^{***}$         & $+6.4^{***}$          & $+5.4^{***}$          & $+4.2^{***}$        \\
\bottomrule
\end{tabular}
\end{table}

\begin{figure}[h]
\centering
\includegraphics[width=.6\linewidth]{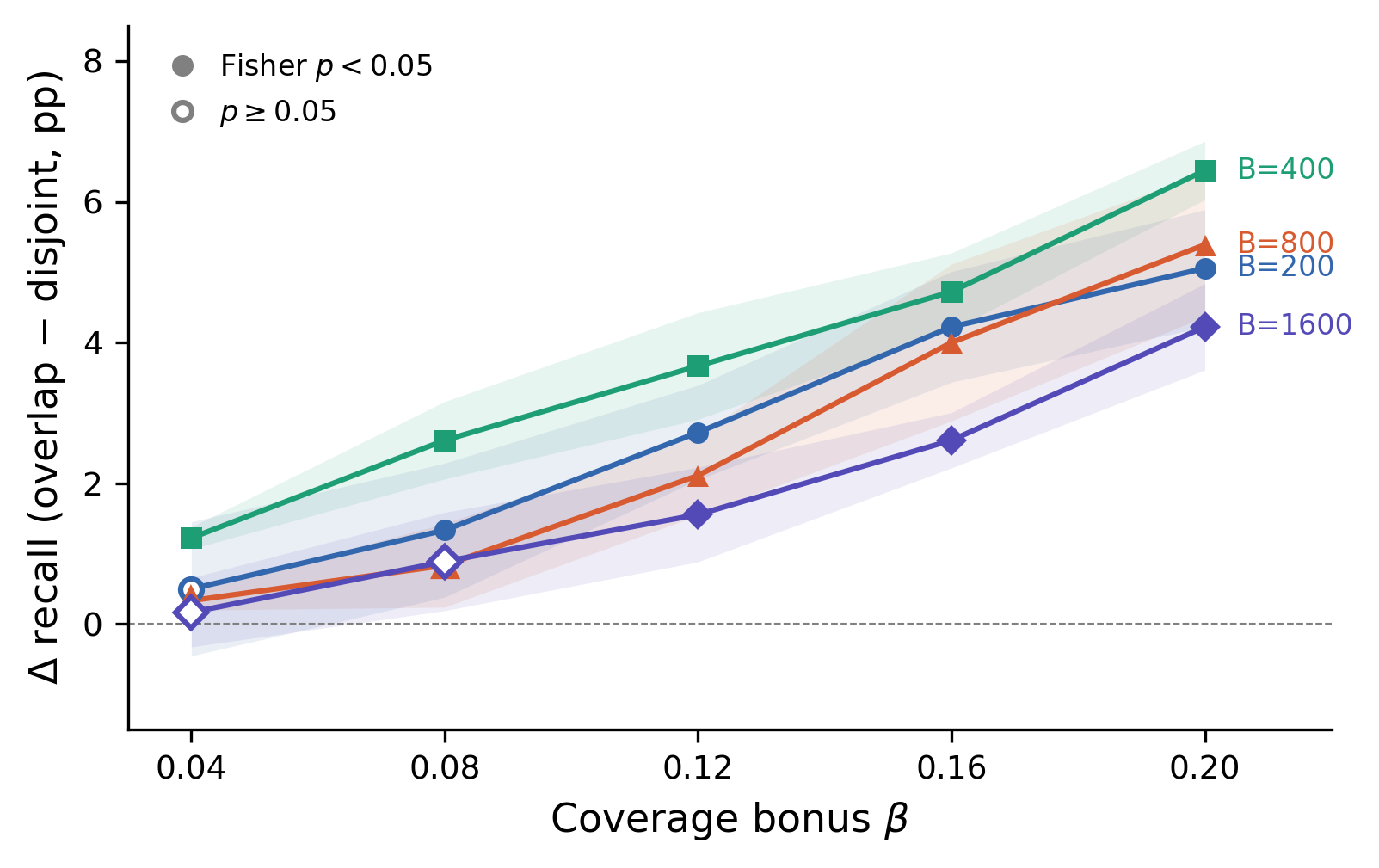}
\caption{ToolBench dose-response: overlap$-$disjoint Recall gap (pp) as a function of coverage bonus $\beta$, at four budgets. Each point averages 3 seeds $\times$ 200 episodes. Filled markers indicate Fisher-combined $p < .05$; open markers indicate $p \ge .05$. The monotonic increase at every budget confirms that the overlap advantage scales with how much the retrieval algorithm rewards OSU coverage, consistent with the structural T+E+ mechanism rather than an incidental artifact. Shaded bands show the per-seed range.}
\label{fig:beta-sweep}
\end{figure}

A secondary negative result is informative: a cohesion-style greedy variant that penalizes similarity to already-selected steps ($-\alpha \cdot \max_{j \in \mathcal{S}} \cos(\mathbf{h}_i, \mathbf{h}_j)$) does \emph{not} produce an overlap advantage at any $\alpha$ or budget (Appendix~\ref{app:toolbench-cohesion-null}). This indicates that the T+E+ mechanism depends on the retrieval algorithm actively \emph{rewarding OSU coverage} rather than on merely discouraging within-pool redundancy.

\subsection{LLM-Mediated Evidence Selection}
\label{sec:llm-e2e}

As an additional check, we ask whether the cross-tab's retrieval quality differences are preserved when the retrieved context is consumed by an LLM. For each of $414$ queries on the unaugmented cache (cell distribution $25.8/31.6/10.6/31.9\%$ T+E+/T+E-/T-E+/T-E-, which \emph{differs} from the $n{=}328$ cross-tab in Section~\ref{sec:cross-tab} because the LLM experiment uses a broader inclusion rule that relaxes the focal-source filter, detailed in Appendix~\ref{app:e2e-discussion}), we run OSU-Mem, Flat Vector, and no-memory at $B{=}256$, feed the retrieved context with the user instruction to GPT-4o-mini ($T{=}0$), ask for exactly three short descriptions of the most relevant past events, SBERT-re-encode the listed events, and score against ground-truth evidence embeddings via \textbf{soft recall} (mean max cosine) and \textbf{hard recall} (fraction with max cosine $> .5$). Three seeds (Python random state; retrieval and OSU construction are fully deterministic, so residual per-seed variance arises from non-determinism in the OpenAI inference backend at $T{=}0$), paired permutation after per-query averaging (the initial pooled-seed test was anti-conservative since $T{=}0$ makes seeds near-duplicates, see Appendix~\ref{app:e2e-discussion}).

\begin{table}[t]
\centering
\small
\caption{LLM-mediated evidence selection on $\tau$-bench unaugmented cache, $B{=}256$, GPT-4o-mini, 3 seeds, paired permutation by query (after seed averaging), $20{,}000$ permutations. Bracketed ranges are 95\% confidence intervals (CIs). Stars: $^{*}p{<}.05$. No-memory floor at hard recall $.003$ confirms the LLM uses retrieved context.}
\label{tab:taubench-e2e}
\resizebox{\linewidth}{!}{%
\begin{tabular}{lccccccc}
\toprule
 & & \multicolumn{3}{c}{\textbf{Hard recall (mean)}} & & \multicolumn{2}{c}{\textbf{$\Delta$ OSU$-$Flat (95\% CI)}} \\
\cmidrule(lr){3-5}\cmidrule(lr){7-8}
Cell & $n$ & OSU & Flat & None & & Soft & Hard \\
\midrule
T+E+        & 107 & $\mathbf{.230}$ & $.166$ & $.008$ & & $+.005$ {\scriptsize $[-.017,+.027]$}        & $\mathbf{+.064^{*}}$ {\scriptsize $[+.008,+.121]$} \\
T+E$-$      & 131 & $.169$          & $.146$ & $.000$ & & $+.019$ {\scriptsize $[-.001,+.039]$}        & $+.023$ {\scriptsize $[-.030,+.075]$}              \\
T$-$E+      & 44  & $.252$          & $.213$ & $.008$ & & $+.016$ {\scriptsize $[-.018,+.051]$}        & $+.039$ {\scriptsize $[-.039,+.123]$}              \\
T$-$E$-$    & 132 & $.074$          & $.066$ & $.000$ & & $+.000$ {\scriptsize $[-.016,+.017]$}        & $+.008$ {\scriptsize $[-.026,+.041]$}              \\
\midrule
\textit{Aggregate} & 414 & $.164$ & $.133$ & $.003$ & & $+.009$ & $+.031$ \\
\bottomrule
\end{tabular}}
\end{table}

\paragraph{Findings.} The T+E+ advantage survives the LLM stage: $+6.4$ pp hard recall ($p{=}.025$), the only cell individually significant after correction (per-cell CIs in Table~\ref{tab:taubench-e2e}), while T$-$E$-$ is null ($p_\text{hard}{=}.65$) exactly as the cross-tab predicts. Two bounds apply: this is an evidence-\emph{selection} check, not a full task evaluation (the LLM produces no tool calls and takes no actions), and it shares the SBERT encoder---and hence some variance---with the cross-tab of Section~\ref{sec:cross-tab}, so it is a consistency check rather than independent corroboration. The pooled-vs.-corrected comparison, the near-circularity discussion, and the soft-vs.-hard mechanism interpretation are in Appendix~\ref{app:e2e-discussion}.

\section{Discussion}
\label{sec:discussion}

\paragraph{Two datasets, complementary roles.} The $\tau$-bench cell analysis and the ToolBench $\beta$ sweep are not two interchangeable corroborations of the same claim; they play \emph{asymmetric, complementary} roles. $\tau$-bench is a naturally mixed distribution and serves to \emph{measure the heterogeneity} of the effect across cells (via the cross-tab), including the cells where overlap hurts. ToolBench artificially restricts the distribution to the favorable cell and serves to \emph{isolate the mechanism} under a known-favorable structure on an independent source. They differ in three ways, each of which is informative. \textbf{(1) Evidence distribution and role.} $\tau$-bench has a naturally mixed distribution ($17\%$ T+E+, $40\%$ T$-$E$-$, $43\%$ mixed) and is the only analysis that measures cross-cell heterogeneity; ToolBench is structurally T+E+ by construction and isolates the favorable-cell mechanism. \textbf{(2) Retrieval algorithm.} $\tau$-bench uses the full OSU-Mem coarse-to-fine procedure, and ToolBench uses the coverage-guided greedy variant. Agreement across algorithms shifts credit from the retrieval procedure onto the \emph{construction principle}. \textbf{(3) Null comparison.} $\tau$-bench's T+E+ internal overlap-vs.-disjoint gap is $+8.9$ Hit@2 pp at $B{=}256$, and ToolBench's overlap-vs.-disjoint gap is $+3.7$ to $+6.2$ Hit@2 pp across all four tested budgets (at $\beta{=}.12$). The magnitudes agree to within a factor of two. 
A sub-population effect measured on $\tau$-bench alone could reflect incidental properties of a single data source; ToolBench guards against this by reproducing the same mechanism signature on an independent source with code-level structural verification of cell assignment. Conversely, ToolBench alone cannot speak to heterogeneity across cells, and $\tau$-bench's per-cell cross-tab is the only analysis that measures this. The two datasets together are stronger than either alone.

\paragraph{A predictive rule for when overlap helps (derived).} The joint cell analysis yields a quantitative heuristic whose thresholds are derived directly from the $\tau$-bench cross-tab; the full linear predictor, break-even points, and budget table appear in Appendix~\ref{app:deployment}. We stress that the rule predicts a \emph{retrieval-quality} outcome (the aggregate Hit@2 gap over a query population), not end-to-end agent task success, which we do not measure (Limitation~iv); it is therefore a hypothesis about where overlap is worth applying, to be confirmed in closed-loop evaluation. Let $\pi_\text{T+}$ denote the expected fraction of queries with at least one pairwise tool-sharing edge among expected evidence. This can be approximated from the agent framework's tool-call logs without running retrieval, by using the fraction of recent episodes that invoke the same tool signature at least twice as a proxy for expected evidence-step tool sharing; when tool-call logs are unavailable, an entity-sharing rate estimated from named entity recognition (NER) over recent trajectories can substitute. The qualitative rule is simple: OSU-Mem is likely net-positive on aggregate retrieval Hit@2 when $\pi_\text{T+}$ is high and likely net-negative when it is low, with a held-out calibration recommended in the intermediate band. Two qualitative dependencies matter when applying the rule. First, the break-even threshold \emph{rises with budget}: at small budgets any cross-view binding helps, whereas at large budgets the T$-$E$-$ disadvantage grows and demands a higher $\pi_\text{T+}$ to remain net-positive. Second, aggregate Recall requires a stricter threshold than Hit@2, because the T$-$E$-$ Recall disadvantage is the larger of the two. We also recommend source-level namespace prefixing as a standard precondition whenever trajectories from multiple sources are concatenated, to avoid the cross-source metadata contamination pitfall (Appendix~\ref{app:taubench-pipeline}).

\paragraph{How much structure does the aggregate near-tie hide?} The cross-tab lets us quantify it directly. Routing each query to its better method on its evidence cell---OSU-Mem for the favorable cells (T+E+, T$-$E+) and Flat Vector otherwise---turns the aggregate Hit@2 near-tie into ${\approx}+5.2$ pp over Flat Vector and lifts aggregate Recall from $-6.3$ to ${\approx}+1.4$ pp. This is the cell-mixture artifact, this paper's central claim, made quantitative: the aggregate null conceals ${\sim}5$ Hit@2 pp of recoverable structure. The figure is an oracle-routed upper bound. This is an in-sample oracle ceiling, not an evaluated or deployable routing method. The per-cell arithmetic, the label-free (metadata-based) policy that approximates it at decision time, and the future-work routing are in Appendix~\ref{app:deployment}.

\paragraph{Limitations.} \textbf{(i)} OSU-Mem depends on embedding quality, and our $\tau$-bench pipeline replaces focal query embeddings with noisy means of evidence; ToolBench's use of verbatim natural-language queries (Section~\ref{sec:overall}) is our strongest evidence that it is not the dominant driver of the T+E+ advantage. \textbf{(ii)} In production, OSU construction relies on extraction heuristics (NER, pattern matching, intent-shift detection) whose errors are not captured by our benchmarks. \textbf{(iii)} Efficiency claims do not transfer unchanged from 64-d to 384-d embeddings. On SBERT, OSU-Mem has the highest retrieve-only latency and the largest storage footprint (${\approx}2.6$\,ms retrieve-only; small relative to typical LLM-call latency, but whether this overhead is acceptable should be validated in the target agent loop, see Appendix~\ref{app:taubench-eff}). \textbf{(iv)} We validate retrieval quality and an LLM-mediated evidence-selection step (Section~\ref{sec:llm-e2e}), but not full multi-step task success in an interactive environment with an action loop and reward. Because the recall-to-task-success relationship is task-dependent \citep{petroni-etal-2021-kilt}, the full task-level impact requires benchmark-specific end-to-end evaluation, which we leave to future work. \textbf{(v)} No public benchmark known to us simultaneously has long trajectories ($\ge 150$ steps) \emph{and} a naturally T+E+ evidence distribution in a single unaugmented dataset. $\tau$-bench gives the length but not the cell concentration, and ToolBench gives the cell concentration but requires planted-retrieval to reach comparable lengths. Building such a benchmark is an actionable direction for future work.

\section{Conclusion}
\label{sec:conclusion}

OSU-Mem is a trajectory memory framework for long-horizon agents that organizes interaction histories into overlapping semantic units and retrieves via budgeted, query-adaptive coarse-to-fine expansion. On two trajectory-derived benchmarks---constructed retrieval settings based on $\tau$-bench and ToolBench---we characterize the precise condition under which the method helps. In a synthetic setting that is E+ by construction (entity-shared for $100\%$ of queries; tool-signature-shared, hence strictly T+E+, for $39.3\%$), it improves recall by $+39.9\%$ and Hit@2 by $+61.5\%$ over the strongest baseline, confirming the mechanism in the regime the theory predicts. The synthetic and $\tau$-bench experiments support the full OSU-Mem coarse-to-fine pipeline, while the ToolBench experiment supports the overlap construction principle under a coverage-guided greedy variant. OSU-Mem is strong and significant on T+E+ evidence (where steps share tool signatures and entities across pairs) and weak on fully heterogeneous evidence, with favorable-cell prevalence estimable from low-cost metadata proxies. $\tau$-bench measures the per-cell effect on naturally mixed evidence via cross-tab, and ToolBench measures it on a distribution structurally restricted to the favorable cell by construction. The mechanism signatures agree across two datasets, two retrieval algorithms, and two ways of instantiating the T+E+ condition. We note that the $\tau$-bench evaluation relies on a constructed query-evidence alignment (Section~\ref{sec:datasets}) rather than natural user queries, and that neither benchmark constitutes a closed-loop agent task evaluation (Section~\ref{sec:discussion}, Limitation~iv). We view OSU-Mem as a design point whose conditional value is now characterized on trajectory-derived retrieval settings, rather than generically claimed on deployment distributions.

\bibliographystyle{plainnat}
\bibliography{references}
\newpage
\appendix
\section{Extended Method Details}
\label{app:method}

\paragraph{Hyperparameter defaults.} Table~\ref{tab:hyperparams} lists all OSU-Mem hyperparameters, their default values, and the rationale for each choice. All values are fixed across the three benchmarks (synthetic, $\tau$-bench, ToolBench); none was tuned per-benchmark, and in particular none was tuned on the $\tau$-bench T+E+ cell (see Section~\ref{sec:method-hyperparams}).

\begin{table}[!htbp]
\centering
\small
\caption{OSU-Mem hyperparameter defaults. All values are fixed across the three benchmarks.}
\label{tab:hyperparams}
\begin{tabular}{@{}>{\raggedright\arraybackslash}p{2.4cm}>{\raggedright\arraybackslash}p{4.0cm}>{\raggedright\arraybackslash}p{6.8cm}@{}}
\toprule
Name & Default & Rationale \\
\midrule
$\Delta_{\text{gap}}$ & $8$ & Base temporal-gap splitting threshold (step-index window). The three view-specific thresholds below are derived from this value. \\
$\Delta_{\text{ent}}$ & $\max(60, 8\Delta_{\text{gap}}) = 64$ & Entities persist across long segments (account numbers, destinations), so we do not want to split entity OSUs at short temporal gaps. \\
$\Delta_{\text{tool}}$ & $\max(12, 2\Delta_{\text{gap}}) = 16$ & Tool-signature recurrences are more localized, and a long gap between two uses of the same tool is usually a semantic boundary. \\
$\Delta_{\text{sg}}$ & $\max(8, \Delta_{\text{gap}}) = 8$ & Subgoals are the shortest semantic unit and split most aggressively. \\
$S_{\max}$ & $20$ & Upper bound on OSU size, preventing degenerate all-steps OSUs on long trajectories. \\
$f_{\min}$ & $2$ & Minimum OSU size. OSUs with a single member are pruned (they have no retrieval advantage over Flat Vector). \\
$M_{\max}$ & $5$ & Cap on OSU memberships per step. $5 = $ (1 entity $+$ 1 tool $+$ 1 subgoal $+$ 1 similarity $+$ 1 slack). Ablation shows saturation at $M_{\max}{=}5$. \\
$K_{\text{coarse}}$ & $\min(|\{u_k\}|,$ \newline $\max(8,\; |\{u_k\}|/10,\; B/30))$ & Adaptive: scan roughly 10\% of the OSU pool at the coarse stage, with a floor of 8 and a budget-proportional term $B/30$ that ensures enough OSUs are scored when the pool is small but the budget is large. \\
$\beta$ (coverage bonus) & $.12$ & ToolBench-specific, chosen as the midpoint of the $\{.04,\ldots,.20\}$ sweep and reported as the primary $\beta$ in tables. Full $\beta$ sweep in Table~\ref{tab:toolbench-beta-sweep}. \\
$\lambda_{\mathrm{nov}}$ (novelty scale) & $.3$ & Scales the blended novelty contribution in the cost-normalized OSU scoring formula. Part of the default method, active only when $B \ge B_{\mathrm{nov}}$. Low-sensitivity parameter. \\
$B_{\mathrm{nov}}$ (novelty budget gate) & $192$ & Budget threshold below which novelty scoring is disabled. At small budgets the candidate pool is too small for novelty to provide useful signal. \\
$\alpha$ (novelty blend weight) & $.4$ & Interpolates embedding novelty and structural (entity or tool) novelty within the blended novelty term. Part of the default method (budget-gated). Low-sensitivity parameter (ablation shows $\pm .2$ moves Hit@2 by $<$.01). \\
Rebuild period $R$ & full-batch & OSU pool is rebuilt on every episode start. Online incremental rebuild is a future-work item. \\
View priority & entity $>$ tool $>$ similarity $>$ subgoal & When $M_{\max}$ is exceeded, memberships in lower-priority views are pruned. Entity is prioritized because entity annotations are the most semantically stable cross-step identifiers. $\tau$-bench's cell analysis subsequently confirms the entity-first choice. \\
\bottomrule
\end{tabular}
\end{table}

\paragraph{Novelty term (default, budget-gated).} The default OSU-Mem retrieval includes a budget-gated novelty-aware OSU scoring that penalises redundancy against already-selected steps. For each candidate step $i$ in an expanding OSU, novelty is scored against the selected set $\mathcal{S}$ as
\begin{align*}
\textsc{EmbNov}(i; \mathcal{S}) &= 1 - \max_{j\in\mathcal{S}}\cos(\mathbf{h}_i, \mathbf{h}_j), \\
\textsc{StructNov}(i; \mathcal{S}) &= \min\bigl(1,\; .6|\textsc{Ent}(i)\setminus\textsc{Ent}(\mathcal{S})| + .4\mathbb{1}[\textsc{Tool}(i)\notin\textsc{Tool}(\mathcal{S})]\bigr), \\
\textsc{Nov}(i;\mathcal{S}) &= (1-\alpha)\textsc{EmbNov} + \alpha\textsc{StructNov},
\end{align*}
with $\alpha = .4$. In the default method, the per-candidate novelty scores are averaged across an OSU's candidates and combined with relevance at the OSU level via the cost-normalized scoring formula $(r_k + \lambda_{\mathrm{nov}} \cdot \nu_k) / (\hat{c}_k + \epsilon)$ with $\lambda_{\mathrm{nov}} = .3$. The novelty term is budget-gated: it activates only when $B \ge B_{\mathrm{nov}} = 192$. \textbf{Its measured effect is small and statistically insignificant on every benchmark we tested} (synthetic $p_\text{Recall}{=}.307$, $p_\text{Hit@2}{=}.876$, $\tau$-bench augmented similarly, see Appendix~\ref{app:synth-ablation} and Appendix~\ref{app:taubench-augmented}). The dominant retrieval benefit derives from the overlap and multi-view construction rather than from the novelty term. We retain the ablation variant ``OSU-Mem (no novelty)'' in our tables so that readers can verify this null finding.

\paragraph{Deterministic packing and complexity.} Selected steps are packed in descending step--query similarity order before OSU summaries (summary-first packing is a robustness check, see Appendix~\ref{app:synth-ablation}). OSU rebuild period $R$ trades update fidelity against overhead. Online retrieval is dominated by a linear scan over OSU centroids ($O(|\text{OSUs}|)$) plus bounded candidate scoring, and an approximate nearest neighbor (ANN) index over centroids would make the coarse stage sublinear.

\paragraph{Production deployment considerations.} The current benchmarks use synthetic or pipeline-provided entity and subgoal annotations. A production deployment would replace them with NER (or pattern matching for URLs and filenames), tool-signature normalization, and an online intent-shift detector for subgoal boundaries.

\section{Synthetic Benchmark: Extended Details}
\label{app:synth}

\subsection{Baselines}
\textbf{No LT Memory} (empty context), \textbf{Budget-Truncated Replay} (most recent $B$ tokens of steps), \textbf{Sliding Window} ($B$-sized rolling recent window), \textbf{Periodic Summary} (recent window plus periodic LLM-style summary of older chunks), \textbf{Flat Vector} (top-cosine steps), \textbf{Flat $+$ MMR} ($\lambda{=}.7$ diversity-penalized top-cosine), \textbf{Coverage-Aware} (step-level plus $\beta$ bonus for covering new subgoals), \textbf{Disjoint Hierarchy} (disjoint subgoal partitions with top-cosine within selected partitions), and \textbf{Tree Abstraction} (recursive summarization \`{a} la RAPTOR \citep{sarthi2024raptorrecursiveabstractiveprocessing}). All baselines use the same token-cost estimation and packing policy.

\subsection{Full Main Results Table (All 4 Budgets)}
\label{app:synth-ablation}
Table~\ref{tab:synth-main-full} gives the full 4-budget main-results table whose $B{=}128$ and $B{=}256$ columns are reproduced in the main text Table~\ref{tab:synth-main}.

\begin{table}[!htbp]
\centering
\tiny
\caption{Full synthetic benchmark results at all four budgets (197 episodes, 907 queries). Best in each column bold. GF = ground-truth filtering, i.e., OSU-Mem with oracle-guided OSU selection.}
\label{tab:synth-main-full}
\resizebox{\linewidth}{!}{
\begin{tabular}{lcccccccc}
\toprule
\textbf{Method} & \multicolumn{2}{c}{$B{=}64$} & \multicolumn{2}{c}{$B{=}128$} & \multicolumn{2}{c}{$B{=}256$} & \multicolumn{2}{c}{$B{=}512$} \\
\cmidrule(lr){2-3}\cmidrule(lr){4-5}\cmidrule(lr){6-7}\cmidrule(lr){8-9}
 & Hit@2 & Recall & Hit@2 & Recall & Hit@2 & Recall & Hit@2 & Recall \\
\midrule
No LT Memory        & .000 & .000 & .000 & .000 & .000 & .000 & .000 & .000 \\
Budget-Trunc Replay & .000 & .004 & .000 & .004 & .000 & .149 & .008 & .321 \\
Sliding Window      & .000 & .000 & .000 & .000 & .000 & .136 & .005 & .314 \\
Periodic Summary    & .000 & .003 & .000 & .041 & .013 & .093 & .073 & .221 \\
Flat Vector         & .005 & .051 & .016 & .088 & .045 & .156 & .147 & .284 \\
Flat $+$ MMR          & .005 & .045 & .022 & .085 & .052 & .158 & .152 & .288 \\
Coverage-Aware      & .006 & .044 & .014 & .086 & .049 & .157 & .147 & .289 \\
Disjoint Hierarchy  & .004 & .047 & .014 & .084 & .044 & .150 & .145 & .282 \\
Tree Abstraction    & .002 & .055 & .012 & .093 & .044 & .157 & .123 & .275 \\
\midrule
\textbf{OSU-Mem}    & \textbf{.010} & \textbf{.076} & \textbf{.027} & \textbf{.130} & \textbf{.084} & \textbf{.221} & \textbf{.222} & \textbf{.364} \\
OSU-Mem $+$ GF      & .010 & .077 & .029 & .130 & .088 & .222 & .225 & .367 \\
OSU-Mem (disjoint)  & .003 & .047 & .008 & .075 & .034 & .138 & .134 & .277 \\
OSU-Mem (sim only)  & .002 & .037 & .008 & .065 & .014 & .110 & .083 & .214 \\
OSU-Mem (no novelty)& .010 & .076 & .027 & .130 & .083 & .217 & .209 & .361 \\
\bottomrule
\end{tabular}}
\end{table}

\subsection{Efficiency and Footprint (64-d Regime)}
\label{app:synth-eff}
At $B{=}256$, OSU-Mem runs at $8.3$\,ms amortized per query ($= $ retrieve plus amortized build cost) with $209$\,KB total storage including the OSU index. Tree Abstraction runs at $57.4$\,ms with $363$\,KB, so OSU-Mem is Pareto-dominant over Tree Abstraction on this benchmark. Flat Vector is faster ($.9$\,ms) but without structured selection. These numbers are specific to the synthetic benchmark's 64-d basis-vector embeddings. Under 384-d SBERT the per-operation cost and per-step storage increase by $6\times$ and the ordering shifts (Appendix~\ref{app:taubench-eff}).

\subsection{Paired Significance Tests (Synthetic, $B{=}256$)}
\label{app:synth-sig}
Paired sign-flip permutation tests ($n_{\text{perm}}{=}10{,}000$, $n{=}197$ episodes) confirm OSU-Mem's gains over all retrieval baselines:

\begin{table}[!htbp]
\centering
\footnotesize
\caption{Synthetic benchmark, paired sign-flip permutation tests at $B{=}256$ ($n_{\text{perm}}{=}10{,}000$, $n{=}197$ episodes): OSU-Mem vs.\ each baseline.}
\label{tab:synth-sig}
\begin{threeparttable}
\begin{tabular}{llccc}
\toprule
Metric & Comparison & $\Delta$ & $p$ (2-sided) & $n$ \\
\midrule
Recall & OSU-Mem vs.\ Flat Vector       & $+.0649$ & $<.001$ & 197 \\
Hit@2  & OSU-Mem vs.\ Flat Vector       & $+.0389$ & $<.001$ & 197 \\
Recall & OSU-Mem vs.\ Flat $+$ MMR        & $+.0635$ & $<.001$ & 197 \\
Hit@2  & OSU-Mem vs.\ Flat $+$ MMR        & $+.0324$ & $.001$\tnote{$\dagger$}  & 197 \\
Recall & OSU-Mem vs.\ Tree Abstraction  & $+.0645$ & $<.001$ & 197 \\
Hit@2  & OSU-Mem vs.\ Tree Abstraction  & $+.0403$ & $<.001$ & 197 \\
Recall & OSU-Mem vs.\ Coverage-Aware    & $+.0645$ & $<.001$ & 197 \\
Hit@2  & OSU-Mem vs.\ Coverage-Aware    & $+.0347$ & $<.001$ & 197 \\
Recall & OSU-Mem vs.\ Disjoint Hierarchy & $+.0712$ & $<.001$ & 197 \\
Hit@2  & OSU-Mem vs.\ Disjoint Hierarchy & $+.0397$ & $<.001$ & 197 \\
Recall & OSU-Mem vs.\ OSU-Mem (disjoint) & $+.0826$ & $<.001$ & 197 \\
Hit@2  & OSU-Mem vs.\ OSU-Mem (disjoint) & $+.0502$ & $<.001$ & 197 \\
Recall & OSU-Mem vs.\ OSU-Mem (no novelty) & $+.0040$ & $.307$ & 197 \\
Hit@2  & OSU-Mem vs.\ OSU-Mem (no novelty) & $+.0008$ & $.876$ & 197 \\
\bottomrule
\end{tabular}
\begin{tablenotes}\footnotesize
\item[$\dagger$] Exact value $.0011$; rounded to three decimal places for consistency. All baseline comparisons satisfy $p \le .002$.
\end{tablenotes}
\end{threeparttable}
\end{table}

\section{$\tau$-bench: Construction Pipeline and Deployment Pitfall}
\label{app:taubench-pipeline}

The $\tau$-bench evaluation pipeline consists of three stages labeled B through D (Stage~A, corresponding to raw data download, is omitted as it involves no methodological choices).

\paragraph{Stage B: Build.} Raw $\tau$-bench historical trajectories are loaded from disk, ground-truth actions are matched to step indices via greedy name plus keyword-argument equality, and each trajectory is converted to a list of step records encoded with \texttt{all-MiniLM-L6-v2} (384-d, $\ell_2$-normalized). Entity annotations come from tool-call argument strings (string values of length 2 to 80 and numeric values treated as entities), tool signatures are tool function names, and subgoal ids increment at each user turn. Episodes with $<2$ matched ground-truth actions or $<8$ retained steps are dropped.

\paragraph{Stage C: Concatenation.} Raw episodes are shuffled with seed 42 and packed into groups whose total trajectory length lies in $[150, 300]$. Each group has one focal source, and other sources contribute steps as distractor content. Step indices and subgoal ids are re-offset to a single coordinate system. The focal query time is the final concatenated step index, so every focal evidence step is temporally distant. Focal query embeddings are replaced with an $\ell_2$-normalized noisy mean ($\sigma{=}.25$) of focal evidence embeddings to enforce query-evidence alignment.

\paragraph{Cross-source metadata contamination pitfall.} Without source-level namespace prefixing, Stage C concatenation can induce a cross-source metadata contamination artifact. Generic strings, such as entity \texttt{"user"} and tool name \texttt{"search"}, recur across multiple unrelated source episodes within the same concatenated trajectory. Because the entity-view and tool-view OSU constructors group steps by string equality, these repeated metadata values can merge steps from unrelated tasks into a single OSU that spans task boundaries. In overlap mode, the resulting contamination can propagate across multiple polluted OSUs, causing greedy expansion to spend budget on cross-task distractor steps. In disjoint mode, each step is assigned to a single OSU, which confines the damage and can produce an artificial advantage for disjoint membership. To prevent this artifact, Stage C applies source-level namespace prefixing during concatenation: every entity string and tool signature is prefixed with a tag identifying its source episode's position in the group (for example, \texttt{"user"} in the third source becomes \texttt{"src3:user"}). This guarantees that entity-view and tool-view OSUs do not bridge unrelated source episodes. We recommend namespace prefixing as a standard precondition for multi-view overlapping memory in any deployment that combines trajectories from multiple sources.

\paragraph{Resulting cache.} Stage C produces the unaugmented cache: 125 concatenated episodes, trajectory length 256 to 311 (median 294), and 328 evaluation queries (mean 2.6 per episode). Mean evidence-to-query distance is $123$ steps, with $73\%$ of evidence positions more than $50$ steps from the query time.

\paragraph{Evidence-structure marginals.} For each of the 328 queries we compute the pairwise sharing rate of ground-truth evidence steps along three views: entity (any shared entity string), tool (identical tool signature), and subgoal (identical subgoal id). Entity sharing has mean $15.0\%$ and median $0$, with $69.5\%$ of queries having zero entity sharing across all pairs. Tool sharing has mean $13.9\%$ and median $0$, with $53.7\%$ of queries having zero tool sharing. Only subgoal sharing is substantive (mean $55.7\%$, median $46.7\%$), because focal evidence usually lies within a single user-turn segment. Only $14.3\%$ of queries ($47/328$) have non-zero sharing across all three views simultaneously, which the synthetic benchmark instantiates for $39.3\%$ of queries (entity and subgoal sharing hold for $100\%$ by construction, tool-signature sharing for $39.3\%$).

\section{$\tau$-bench: Multi-Burst-Augmented Cache}
\label{app:taubench-augmented}

\subsection{Stage D: Multi-Burst Augmentation}
Real $\tau$-bench focal evidence is dominated by the T$-$E$-$ cell of the cross-tab. Stage D constructs a controlled mechanism probe. For each focal evidence step we generate 3 scattered copies within the focal segment (inter-burst gap $\ge 18$ steps) whose embeddings are the anchor embedding plus Gaussian noise ($\sigma{=}.10$, then $\ell_2$-renormalized) and which inherit the anchor's entity, tool, and subgoal annotations. Each copy is added as additional evidence for the same query. We verify directly that this lifts $100\%$ of queries into T+E+ (mean entity sharing rises from $15.0\%$ to $32.2\%$, mean tool sharing from $13.9\%$ to $32.3\%$, and $0\%$ of queries have zero sharing in any view). Stage D is \emph{not} a method-friendly augmentation. Its predicted effect on retrieval is derived from the cross-tab on the unaugmented cache (main text Section~\ref{sec:cross-tab}), which is performed before any Stage D data enters evaluation. The augmented cache is used to verify that prediction at scale and to perform multi-budget significance tests that would be underpowered on the $n{=}56$ T+E+ sub-cell alone.

\subsection{Main Results on the Augmented Cache}
\label{app:taubench-main}
Table~\ref{tab:taubench-main-full} reports the full method table on the multi-burst-augmented cache (125 episodes, 328 queries).

\begin{table}[!htbp]
\centering
\tiny
\caption{$\tau$-bench multi-burst-augmented cache results at all four budgets (query-level averaging; episode-level values in Figure~\ref{fig:augmented-curves}). OSU-Mem achieves the highest Hit@2 at every budget. Quantitatively consistent with the T+E+ row of the unaugmented-cache cross-tab (main-text Table~\ref{tab:cell-crosstab}), as a \emph{prediction} of the cell analysis rather than an independent claim.}
\label{tab:taubench-main-full}
\resizebox{\linewidth}{!}{
\begin{tabular}{lcccccccc}
\toprule
\textbf{Method} & \multicolumn{2}{c}{$B{=}64$} & \multicolumn{2}{c}{$B{=}128$} & \multicolumn{2}{c}{$B{=}256$} & \multicolumn{2}{c}{$B{=}512$} \\
\cmidrule(lr){2-3}\cmidrule(lr){4-5}\cmidrule(lr){6-7}\cmidrule(lr){8-9}
 & Hit@2 & Recall & Hit@2 & Recall & Hit@2 & Recall & Hit@2 & Recall \\
\midrule
Budget-Trunc Replay & .003 & .035 & .005 & .049 & .081 & .091 & .163 & .178 \\
Sliding Window      & .000 & .026 & .000 & .044 & .079 & .090 & .161 & .177 \\
Periodic Summary    & .003 & .034 & .026 & .053 & .188 & .127 & .347 & .219 \\
Flat Vector         & .051 & .076 & .141 & .122 & .328 & .203 & .574 & .318 \\
Flat $+$ MMR        & .037 & .079 & .172 & .134 & .347 & .201 & .646 & .335 \\
Coverage-Aware      & .032 & .076 & .171 & .136 & .333 & .201 & .673 & .366 \\
Disjoint Hierarchy  & .099 & .080 & .181 & .123 & .281 & .180 & .465 & .271 \\
Tree Abstraction    & .081 & .086 & .130 & .123 & .193 & .133 & .316 & .203 \\
\midrule
\textbf{OSU-Mem}    & \textbf{.125} & .085 & \textbf{.250} & \textbf{.145} & \textbf{.435} & \textbf{.241} & \textbf{.676} & \textbf{.384} \\
OSU-Mem $+$ GF      & .125 & .085 & .250 & .145 & .435 & .241 & .680 & .386 \\
OSU-Mem (disjoint)  & .109 & .061 & .150 & .093 & .387 & .209 & .520 & .288 \\
OSU-Mem (sim only)  & .063 & .045 & .145 & .086 & .285 & .157 & .466 & .243 \\
OSU-Mem (no novelty)& .125 & .085 & .250 & .145 & .419 & .233 & .680 & .386 \\
\bottomrule
\end{tabular}}
\end{table}

\subsection{Multi-Budget Paired Permutation Tests}
\label{app:taubench-perm}
Table~\ref{tab:taubench-perm} reports paired sign-flip permutation tests ($n_{\text{perm}}{=}10{,}000$, $n{=}125$ episodes) for Hit@2 at all four budgets on the augmented cache.

\begin{table}[!htbp]
\centering
\footnotesize
\caption{Hit@2 effect sizes with $p$-values on the $\tau$-bench multi-burst-augmented cache.}
\label{tab:taubench-perm}
\resizebox{\linewidth}{!}{
\begin{tabular}{lcccc}
\toprule
\textbf{Comparison} & $B{=}64$ & $B{=}128$ & $B{=}256$ & $B{=}512$ \\
\midrule
OSU-Mem vs.\ Flat Vector        & $+.071\ (p{=}.004^{**})$ & $+.104\ (p{=}.002^{**})$ & $+.141\ (p{=}.001^{***})$ & $+.107\ (p{<}.001^{***})$ \\
OSU-Mem vs.\ Flat $+$ MMR       & $+.064\ (p{=}.010^{**})$ & $+.087\ (p{=}.014^{*})$  & $+.135\ (p{<}.001^{***})$ & $+.063\ (p{=}.085)$ \\
OSU-Mem vs.\ Coverage-Aware     & $+.068\ (p{=}.009^{**})$ & $+.079\ (p{=}.016^{*})$  & $+.122\ (p{=}.001^{**})$  & $+.092\ (p{=}.004^{**})$ \\
OSU-Mem vs.\ Disjoint Hierarchy & $+.062\ (p{=}.030^{*})$  & $+.106\ (p{=}.002^{**})$ & $+.201\ (p{<}.001^{***})$ & $+.276\ (p{<}.001^{***})$ \\
OSU-Mem vs.\ Tree Abstraction   & $+.022\ (p{=}.460)$      & $+.097\ (p{=}.009^{**})$ & $+.236\ (p{<}.001^{***})$ & $+.368\ (p{<}.001^{***})$ \\
OSU-Mem vs.\ (disjoint)         & $+.018\ (p{=}.351)$      & $+.088\ (p{<}.001^{***})$ & $+.087\ (p{=}.012^{*})$  & $+.122\ (p{=}.002^{**})$ \\
\bottomrule
\end{tabular}}
\end{table}

Three patterns deserve emphasis. (i) Effect sizes against \emph{hierarchical} baselines (Disjoint Hierarchy, Tree Abstraction) grow monotonically with budget, reaching $+.28$ and $+.37$ Hit@2 pp at $B{=}512$. (ii) Effect sizes against \emph{flat} baselines (Flat $+$ MMR, Flat Vector) peak at $B{=}256$ and soften at $B{=}512$, because flat retrieval obtains enough slots at large budgets to accidentally cover multiple evidence bursts. (iii) Overlap's Hit@2 benefit (OSU-Mem vs.\ OSU-Mem-disjoint) requires $B \ge 128$ to activate, even though the Recall benefit is present from $B{=}64$. At $B{=}64$ only ${\approx}4$ steps fit in the budget, which is insufficient headroom for the multi-handle mechanism to gather $\ge 2$ distinct evidence bursts from different OSUs, though it still improves expected Recall via OSU ordering.

\subsection{A Novel Finding at $B{=}64$: Recall-Hit@2 Divergence}
\label{app:taubench-b64}

One pattern visible in the augmented-cache multi-budget table does not appear on synthetic. At $B{=}64$, OSU-Mem's Recall is statistically \emph{indistinguishable} from every flat baseline ($p \in \{.874, .330, .431\}$ against Flat Vector, Flat $+$ MMR, Coverage-Aware), yet its Hit@2 is $2.0$ to $2.4\times$ higher ($.125$ vs.\ $.051, .037, .032$, all $p < .01$). At $B{=}64$ budget suffices for ${\approx}4$ retrieved steps. Flat retrieval packs the 4 cosine-top steps, which under the $\sigma{=}.10$ burst augmentation cluster inside a single evidence burst. High intra-burst cosine similarity means all 4 slots become near-duplicates. OSU-Mem's OSU boundaries force greedy expansion to select \emph{across} OSUs, recovering distinct evidence pieces at the cost of some within-burst redundancy, so Recall matches flat while Hit@2 doubles. This is the clearest manifestation of the multi-evidence-pairing mechanism that motivated Hit@2 as a primary metric. The effect persists (with smaller magnitude, $.109$) under disjoint OSU membership, which suggests that OSU-based \emph{structured selection itself}, rather than overlap specifically, is the mechanism responsible. Multi-view overlap contributes the additional $+.016$ margin.

\section{$\tau$-bench: Efficiency under SBERT Encoding}
\label{app:taubench-eff}

The synthetic benchmark's efficiency claim, that OSU-Mem Pareto-dominates Tree Abstraction on latency and storage, does \emph{not} transfer unchanged to 384-d SBERT. The $6\times$ dimensionality increase changes per-operation cosine cost and per-step embedding storage proportionally, which is enough to reverse the ordering between methods.

\begin{table}[!htbp]
\centering
\footnotesize
\caption{$\tau$-bench augmented-cache latency and storage at $B{=}256$ under 384-d SBERT.}
\label{tab:taubench-eff}
\begin{tabular}{lcccc}
\toprule
\textbf{Method} & Retrieve (ms) & Amortized (ms) & Storage (KB) & Index+Meta (KB) \\
\midrule
Flat Vector        & $.45$ & $.45$  & $926$  & $0$ \\
Flat $+$ MMR       & $.44$ & $.44$  & $926$  & $0$ \\
Coverage-Aware     & $.48$ & $.48$  & $926$  & $0$ \\
Disjoint Hierarchy & $.48$ & $1.81$  & $1274$ & $348$ \\
Tree Abstraction   & $.46$ & $6.28$  & $1151$ & $225$ \\
\midrule
\textbf{OSU-Mem}   & $2.64$ & $4.84$  & $1497$ & $571$ \\
OSU-Mem (disjoint) & $.60$ & $2.87$  & $1158$ & $232$ \\
\bottomrule
\end{tabular}
\end{table}

To summarize precisely: \emph{on 64-d synthetic embeddings, OSU-Mem Pareto-dominates Tree Abstraction on quality, latency, and storage, while on 384-d SBERT, OSU-Mem has the highest retrieve-only latency (${\approx}2.6$\,ms vs.\ ${\le}.5$\,ms for flat methods) and the largest storage ($1497$\,KB), though its amortized cost ($4.84$\,ms) is lower than Tree Abstraction's ($6.28$\,ms) owing to Tree Abstraction's expensive recursive build.} At ${\approx}2.6$\,ms retrieve-only latency, the absolute overhead is small relative to typical LLM-call latency (hundreds to thousands of ms for backbone generation per action), but the acceptable latency budget is agent- and deployment-dependent and should be validated in the target action loop; the \emph{relative} cost ordering is additionally encoder-dimension-dependent. For encoders substantially wider than 384-d, an ANN index over OSU centroids becomes important.

\section{LLM-Mediated Evidence Selection: Extended Discussion}
\label{app:e2e-discussion}

This section provides extended discussion of the LLM-mediated evidence selection experiment in main-text Section~\ref{sec:llm-e2e}, including the corrected statistical analysis, the near-circularity caveat, the soft-vs.-hard mechanism interpretation, and the rationale for not running multi-step rollouts.

\paragraph{Cell distribution differs from the cross-tab.} Main-text Section~\ref{sec:cross-tab} reports $n{=}328$ queries with cell distribution $17.1/29.3/13.4/40.2\%$ T+E+/T+E-/T-E+/T-E-, while Section~\ref{sec:llm-e2e} reports $n{=}414$ with distribution $25.8/31.6/10.6/31.9\%$. The difference is the inclusion criterion. The cross-tab analysis was performed on the focal-source filtered query subset (each query has a specific focal source whose actions form the ground truth), while the LLM-mediated experiment uses a broader inclusion rule (any query with $\ge 2$ evidence steps and time index $\ge 2$, irrespective of focal-source filtering). We verified directly that no cell contains queries with fewer than $2$ evidence steps, and the T$-$E$-$ cell, in particular, is structurally pure ($132/132$ queries have $\ge 2$ evidence steps), so its statistical null in the corrected analysis is not a fallback-classification artifact.

\paragraph{Statistical analysis: per-query averaging instead of per-seed pooling.} For the LLM-mediated experiment, significance is evaluated at the query level rather than by treating seed-level repetitions as independent observations. At temperature $0$, with deterministic retrieval and frozen Python random sources that do not propagate to the OpenAI API, the three seeds are highly correlated near-duplicate observations. We verify this directly: per-query within-seed standard deviation is $13\%$ to $17\%$ of per-query across-query standard deviation for OSU-Mem and Flat Vector, confirming that seeds add little independent variance. Treating the three per-seed paired differences for each query as independent would therefore overstate the effective sample size. We instead average the per-query soft and hard recalls across seeds first, then run a paired sign-flip permutation test on the resulting $n{=}107/131/44/132$ paired differences, one per query. Under this query-level test, the headline T+E+ hard-recall result remains significant ($+.064$ pp, $p{=}.025$, 95\% CI $[+.008, +.121]$), while the T+E$-$ soft-recall result is borderline ($p{=}.062$) and is not treated as a separate finding.

\paragraph{Near-circularity from a shared SBERT encoder.} A structural caveat that limits how much independent evidence this experiment provides is that the same all-MiniLM-L6-v2 encoder is used in three places. (i) The retrieval index encodes the trajectory steps. (ii) The ground-truth evidence representation is the corresponding pre-computed step embeddings. (iii) The LLM's listed events are re-encoded with the same SBERT to compute soft and hard recall against the ground-truth representation. Combined with a temperature-$0$ prompt that asks for a ``concrete factual summary of something that happened or was stated in the history,'' the LLM is structurally pushed toward extractive paraphrase of the retrieved content, and the SBERT-vs.-SBERT scoring is most sensitive to whether the ground-truth content was \emph{in the retrieved context}, which closely corresponds to the quantity that retrieval recall measures. The hard-recall signal therefore shares its dominant variance source with the Section~\ref{sec:cross-tab} retrieval analysis. This experiment should be read as a check that the cross-tab's T+E+ advantage is preserved when an LLM filtering stage is inserted between retrieval and scoring, rather than as an independent corroboration of the cross-tab on a different signal source.

\paragraph{Soft vs.\ hard split as a mechanism interpretation.} On T+E+, the soft recall gap is null ($\Delta{=}{+}.005$, $p{=}.64$) while the hard recall gap is the only individually-significant finding ($\Delta{=}{+}.064$, $p{=}.025$). One reading is that soft recall captures ``any retrieved evidence whose semantic neighborhood the LLM is willing to summarize.'' Both retrievers reach comparable mean max-similarity, because the LLM at temperature $0$ extracts content from whatever appears in its context. Hard recall, with its $.5$ threshold, captures ground-truth evidence steps that the LLM identifies precisely enough to clear that threshold, and overlap construction adds value by giving the LLM the same evidence under multiple semantic views (entity, tool) so that one of the LLM's three listings can align closely with one of them.

\paragraph{The hard-recall threshold of $.5$.} The threshold is chosen as a midpoint between SBERT cosine similarities of unrelated text pairs ($\approx .1$ to $.2$) and the value at which two paraphrases of the same content reliably exceed ($\approx .7$). It corresponds to the condition that the LLM's listed event is closer in semantic content to a specific ground-truth evidence step than to most other content in the trajectory. The single-threshold choice is a researcher degree of freedom, and we address it with three observations. (a) The T+E+ effect we report is \emph{specifically} a hard-recall effect. The two metrics have a per-query Pearson correlation of $.63$ on T+E+ and agree in sign on $75\%$ of per-query OSU$-$Flat differences. (b) The $.5$ threshold is pre-registered in the evaluation script before any analysis. (c) The full per-query $\{$soft recall, hard recall at $.5\}$ pairs across three seeds are stored in a released JSON artifact, allowing alternative thresholds to be evaluated by re-running the scoring step with a one-line modification.

\paragraph{Why we did not run multi-step rollouts.} A natural extension is full multi-step agent task success on an interactive environment. We did not run that experiment, and the LLM-mediated selection experiment we did run is not a substitute for it. The publicly-released $\tau$-bench cache provides historical trajectories with the focal-source query placed at the \emph{end} of the concatenated trajectory. There is no future action to predict and no environment to step. A genuine end-to-end evaluation would require: (i) the $\tau$-bench environment harness, (ii) an agent loop where the LLM produces tool calls, (iii) queries placed mid-trajectory, and (iv) a sufficiently large sample. Building this is a substantive follow-up effort. We identify full closed-loop evaluation as future work (Section~\ref{sec:discussion}, Limitation~(iv)); the present experiment is an evidence-selection check, not a substitute for it.

\section{ToolBench: Construction and Sensitivity Analyses}
\label{app:toolbench-sensitivity}

\subsection{Benchmark Construction Details}
\label{app:toolbench-construction}

\paragraph{Loading and parsing.} Raw ToolBench trajectories are loaded from the G1 (single-tool) and G2 (multi-tool) answer directories. Each trajectory is parsed for its query, the longest assistant message chain, and the sequence of assistant function calls. For each call we extract the function name (becomes the step ``action''), the arguments dict, and the preceding assistant content (becomes part of the step ``content''). Entities are the string values of length 2 to 80 and numeric values in the argument dict (nested one level deep). The tool signature is \texttt{name.split('\_for\_')[-1]} if \texttt{'\_for\_'} is present, else \texttt{name.split('\_')[0]}. Subgoal id is the per-call index within the trajectory. Step token cost is $\max(5, \min(40, |\text{content}|/4))$.

\paragraph{Planted-retrieval episode construction.} Let $n_\text{ev}$ denote the evidence size and $\rho_\text{conf}$ the confounder ratio. For each of 200 episodes: (1) sample a target trajectory from all loaded trajectories with $\ge n_\text{ev}$ steps, (2) designate $n_\text{ev}{=}3$ steps as evidence, (3) compute the target's query-embedding nearest neighbors in the trajectory-level query embedding space (top-30), (4) draw confounder steps from the nearest neighbors' trajectories until $\rho_\text{conf}{=}.75$ of the background is filled, and unrelated steps from random non-neighbor trajectories until the background reaches ${\approx}500$ steps, (5) shuffle the background and insert the 3 evidence steps at $n_\text{ev}$ evenly-spaced positions, and (6) set the query time to the final step index. All 200 episodes per seed use the same random generator seeded by the seed value, and results are averaged over seeds 42, 123, 456.

\subsection{Evidence-Size Sensitivity}
We test evidence sizes $\{2, 3, 4\}$ across 3 seeds $\times$ 200 episodes at $\beta{=}.12$. Evidence size 3 is the primary configuration for the structural reason stated in the main text: it is the smallest nontrivial planted-evidence cluster in which all three pairwise relations among evidence steps can exhibit tool/entity sharing. The sensitivity sweep shows that the overlap advantage is directionally preserved across evidence sizes $\{2,3,4\}$; evidence size 3 also happens to be the most stable setting by per-seed directional consistency and $B{=}400$ significance.

\begin{table}[!htbp]
\centering
\small
\caption{Evidence-size sensitivity of the ToolBench overlap-vs.-disjoint Recall gap ($\beta{=}.12$, 3 seeds).}
\label{tab:toolbench-evsize}
\begin{tabular}{lccc}
\toprule
 & $n_{\text{ev}}{=}2$ & $\mathbf{n_{\text{ev}}{=}3}$ & $n_{\text{ev}}{=}4$ \\
\midrule
Sig.\ tests ($p<.05$), $12$ cells & $4/12$ & $\mathbf{9/12}$ & $9/12$ \\
All 12 directionally positive & No ($10/12$) & $\mathbf{\text{Yes}\ (12/12)}$ & No ($11/12$) \\
$B{=}400$ all seeds $p<.01$ & No & $\mathbf{\text{Yes}}$ & No \\
Avg recall gap at $B{=}400$ (pp) & $+3.2$ & $+3.7$ & $+3.2$ \\
\bottomrule
\end{tabular}
\end{table}

\subsection{Hit@2 Version of the $\beta$ Sweep}
Table~\ref{tab:toolbench-beta-hit2} is the Hit@2 counterpart of main-text Table~\ref{tab:toolbench-beta-sweep}. Both show the same monotonic dose-response in $\beta$, with Hit@2 displaying larger absolute effects.

\begin{table}[!htbp]
\centering
\small
\caption{ToolBench overlap$-$disjoint \emph{Hit@2} gap (pp) across $\beta$ and $B$, 3-seed averaged.}
\label{tab:toolbench-beta-hit2}
\begin{tabular}{lcccc}
\toprule
$\beta$ & $B{=}200$ & $B{=}400$ & $B{=}800$ & $B{=}1600$ \\
\midrule
$.04$          & $+1.2$  & $+.7$  & $+1.0$  & $+.5$ \\
$.08$          & $+2.7$  & $+2.7$  & $+2.0$  & $+2.2$ \\
$\mathbf{.12}$ & $\mathbf{+6.2}$ & $\mathbf{+4.8}$ & $\mathbf{+3.7}$ & $\mathbf{+3.8}$ \\
$.16$          & $+8.3$  & $+8.0$  & $+6.7$  & $+6.8$ \\
$.20$          & $+9.7$  & $+10.8$ & $+10.8$ & $+9.3$ \\
\bottomrule
\end{tabular}
\end{table}

\subsection{Per-Seed Breakdown at $\beta{=}.12$}
\label{app:toolbench-per-seed}

Table~\ref{tab:toolbench-per-seed} gives the per-seed Recall-gap breakdown at $\beta{=}.12$ that is Fisher-combined in the main text. Per-seed directional consistency across all $5\beta \times 4B = 20$ cells is $18/20$ (seed 42), $19/20$ (seed 123), and $20/20$ (seed 456), with pooled $57/60 = 95\%$.

\begin{table}[!htbp]
\centering
\small
\caption{Per-seed overlap$-$disjoint \emph{Recall} gap (pp) at $\beta{=}.12$, single-seed permutation tests. Seeds $\{42, 123, 456\}$, 200 episodes each. Stars: $^{*}p{<}.05$, $^{**}p{<}.01$, $^{***}p{<}.001$.}
\label{tab:toolbench-per-seed}
\begin{tabular}{lcccc}
\toprule
\textbf{Seed} & $B{=}200$ & $B{=}400$ & $B{=}800$ & $B{=}1600$ \\
\midrule
42             & $+2.2^{*}$    & $+2.8^{**}$   & $+1.3$        & $+2.5^{**}$ \\
123            & $+2.3^{*}$    & $+3.5^{***}$  & $+2.3^{**}$   & $+1.2$ \\
456            & $+3.7^{***}$  & $+4.7^{***}$  & $+2.7^{*}$    & $+1.0$ \\
\midrule
Mean           & $+2.7$        & $+3.7$        & $+2.1$        & $+1.6$ \\
Fisher $p$     & $<.001$       & $<.001$       & $<.01$        & $<.05$ \\
\bottomrule
\end{tabular}
\end{table}

\subsection{Negative Result: Cohesion-Style $\alpha$ Variant}
\label{app:toolbench-cohesion-null}

A secondary retrieval variant, a cohesion-style greedy procedure that adds $-\alpha \cdot \max_{j\in\mathcal{S}}\cos(\mathbf{h}_i, \mathbf{h}_j)$ to the selection score, does \emph{not} exhibit an overlap advantage on ToolBench. Sweeping $\alpha \in \{0, .005, .01, .02, .04\}$, the overlap-vs.-disjoint Hit@2 gap is approximately zero at every $\alpha$ and budget. This negative result is informative: the T+E+ mechanism depends on the retrieval algorithm actively \emph{rewarding OSU coverage} rather than on merely discouraging within-pool redundancy.

\section{Combined Summary of Trajectory-Derived Findings}
\label{app:realtraj-summary}

The two-dataset trajectory-derived validation supports seven conclusions. (1) A deployment pitfall is identified and fixed on $\tau$-bench (cross-source metadata contamination, addressed via namespace prefixing). (2) Cell structure is the primary determinant of whether OSU-Mem helps on both datasets. The $\tau$-bench cross-tab identifies T+E+ as the regime of significant win and T$-$E$-$ as the regime of significant loss, and the ToolBench structural T+E+ construction provides a controlled corroboration of the T+E+ mechanism. (3) Favorable-cell prevalence is cheaply estimable from metadata proxies, and ToolBench's code-level structural verification of T+E+ is the strongest cell-assignment validation we can provide. (4) The $\tau$-bench multi-burst augmentation is a predicted regime-lift experiment (rather than an augmentation selected post-hoc for favorable outcomes), and the ToolBench structurally-T+E+ result corroborates the underlying prediction without any burst injection. (5) The mechanism is the construction principle rather than the retrieval algorithm. The same effect appears under two different retrieval algorithms (coarse-to-fine on $\tau$-bench, coverage-guided greedy on ToolBench). (6) Multi-budget permutation tests expose budget-dependent nuances (overlap's Hit@2 benefit requires $B \ge 128$, and effect sizes against hierarchical baselines grow monotonically while effect sizes against flat baselines peak at $B{=}256$). (7) A separable structured-selection mechanism at the tightest budget ($\tau$-bench $B{=}64$): OSU-Mem's Recall is statistically indistinguishable from flat baselines while its Hit@2 is 2 to 2.4$\times$ higher, driven by OSU-based structured selection rather than by overlap specifically.

Four caveats. (i) OSU-Mem's aggregate near-tie with strong baselines on the unaugmented $\tau$-bench cache is genuine and reflects cell mixture. The predicted cell distribution from metadata (Section~\ref{sec:discussion} predictive rule) indicates which regime applies and hence whether OSU-Mem or a flat baseline is expected to retrieve better. (ii) ToolBench's aggregate OSU-vs.-Flat comparison is effectively a tie because ToolBench's planted-retrieval distribution gives Flat Vector nearly saturated recall. The primary signal is the overlap-vs.-disjoint contrast. (iii) Under SBERT 384-d, OSU-Mem retains the highest retrieve-only latency and the largest storage footprint, though its amortized cost remains below Tree Abstraction's; efficiency claims from the 64-d synthetic benchmark do not transfer unchanged. (iv) The $n{=}56$ T+E+ cell of $\tau$-bench has per-budget $p \in \{.013, .003, .022, .024\}$. Stability across budgets is stronger evidence than any individual cell $p$-value, and ToolBench's orders-of-magnitude larger sample provides higher-powered corroboration of the same effect.

\section{Predictive Rule: Derivation from the $\tau$-bench Cross-Tab}
\label{app:deployment}

We derive the main-text predictive-rule thresholds directly from the $\tau$-bench cross-tab data. The derivation is standalone, uses only the per-cell effect sizes and the empirical cell distribution, and produces a closed-form linear predictor for the aggregate gap as a function of the query population's expected cell distribution.

\paragraph{Setup.} Let $\Delta_c^{(B)}$ denote the observed OSU-Mem$-$Flat Vector gap on cell $c \in \{\text{T+E+}, \text{T+E-}, \text{T-E+}, \text{T-E-}\}$ at budget $B$, measured on $\tau$-bench's unaugmented cache. Let $f_c$ denote the fraction of queries in cell $c$ in a target deployment. The aggregate gap over that deployment, assuming the per-cell effects transfer, is
\begin{equation}
\Delta_{\text{agg}}^{(B)}(f) \;=\; \sum_{c} f_c \, \Delta_c^{(B)},\qquad \sum_c f_c = 1.
\end{equation}
At $B{=}256$, the $\tau$-bench cross-tab gives (cf.\ main-text Table~\ref{tab:cell-crosstab}): $\Delta_{\text{T+E+}}{=}{+}.179$, $\Delta_{\text{T+E-}}{=}{-}.031$, $\Delta_{\text{T-E+}}{=}{+}.159$, $\Delta_{\text{T-E-}}{=}{-}.129$ (Hit@2), and $+.080$, $-.047$, $+.000$, $-.157$ (Recall). With $\tau$-bench's own cell distribution $f = (.171, .293, .134, .402)$, this reconstructs $\Delta_{\text{agg}}^{(256)} = -.009$ Hit@2 and $-.063$ Recall, matching the observed aggregate near-tie to three decimal places.

\paragraph{One-parameter family.} To obtain a predictive rule parameterized by a single observable, we fix the conditional distribution of $\neg\text{T+E+}$ cells at the $\tau$-bench empirical proportions $(.353, .162, .485)$ for $(\text{T+E-}, \text{T-E+}, \text{T-E-})$ and vary $p = f_{\text{T+E+}}$. This gives a one-parameter family
\begin{equation}
f(p) \;=\; \bigl(p,\ .353(1{-}p),\ .162(1{-}p),\ .485(1{-}p)\bigr),
\end{equation}
and the aggregate gap becomes exactly linear in $p$:
\begin{align}
\Delta_{\text{agg}}^{(256)}(p)\big|_{\text{Hit@2}} &= +.226\,p - .048, \\
\Delta_{\text{agg}}^{(256)}(p)\big|_{\text{Recall}} &= +.173\,p - .093.
\end{align}
The break-even points are $p^* = .211$ (21\%) for Hit@2 and $p^* = .536$ (54\%) for Recall. With a $\pm .02$ safety margin on the aggregate Hit@2 gap around break-even, Hit@2 is likely net-positive ($\ge +.02$) for $p \ge .299$ and likely net-negative ($\le -.02$) for $p \le .123$.

\paragraph{Tool-sharing marginal form.} Since tool-sharing is easier to estimate from metadata than joint tool-plus-entity sharing, we re-express the rule in terms of $\pi_\text{T+} = f_{\text{T+E+}} + f_{\text{T+E-}}$. Under the same parameterization, $\pi_\text{T+}(p) = p + .353(1{-}p)$, so the Hit@2 break-even $p^*{=}.211$ maps to $\pi_\text{T+}^* = .490$ (49\%). The safety bands become $\pi_\text{T+} \ge .547$ (net-positive) and $\pi_\text{T+} \le .432$ (net-negative). Recall break-even is at $\pi_\text{T+} \approx .700$ (70\%).

\paragraph{Any-binding marginal form (recommended when T$-$E+ is non-negligible).} The tool-sharing marginal $\pi_\text{T+}$ implicitly treats T$-$E+ as a non-positive cell, yet the $\tau$-bench cross-tab shows that T$-$E+ has a directional positive Hit@2 effect growing from $+4.5$ pp to $+22.7$ pp across budgets. If a target deployment has a non-negligible fraction of entity-only-sharing queries, the tool-marginal form under-counts the favorable regime. An alternative ``any-binding'' marginal is $\pi_\text{any+} = f_\text{T+E+} + f_\text{T+E-} + f_\text{T-E+} = 1 - f_\text{T-E-}$. Under the same one-parameter parameterization, $\pi_\text{any+}(p) = 1 - .485(1{-}p)$, so the Hit@2 break-even at $B{=}256$ maps to $\pi_\text{any+}^{*} \approx .62$ (62\%), with safety bands $\pi_\text{any+} \ge .66$ (net-positive) and $\pi_\text{any+} \le .57$ (net-negative).

\paragraph{Budget dependence of the break-even.} Because the T$-$E$-$ disadvantage grows monotonically in $B$ (main-text Table~\ref{tab:cell-crosstab-multi}), the break-even threshold rises with budget. Table~\ref{tab:deployment-budget} reports the Hit@2 break-even $p^*$ at each of the four tested budgets.

\begin{table}[!htbp]
\centering
\small
\caption{Hit@2 break-even $p^* = $ P(T+E+) as a function of budget, derived from the $\tau$-bench cross-tab.}
\label{tab:deployment-budget}
\begin{tabular}{lcccc}
\toprule
Budget & $B{=}64$ & $B{=}128$ & $B{=}256$ & $B{=}512$ \\
\midrule
$\Delta_{\text{T+E+}}$ (Hit@2 pp) & $+16.1$ & $+23.2$ & $+17.9$ & $+16.1$ \\
$\Delta_{\text{T-E-}}$ (Hit@2 pp) & $+.8$ & $-2.3$ & $-12.9$ & $-36.4$ \\
$p^*$ = break-even P(T+E+)       & $0\%^{\dagger}$ & $0\%^{\dagger}$ & $21.1\%$ & $54.3\%$ \\
$\pi_\text{T+}^*$ = break-even P(T+) & $0\%^{\dagger}$ & $0\%^{\dagger}$ & $49.0\%$ & $70.4\%$ \\
\bottomrule
\multicolumn{5}{l}{\small $^{\dagger}$ T$-$E$-$ effect is non-negative at this budget, so any $p{>}0$ gives non-negative aggregate.}
\end{tabular}
\end{table}

\paragraph{Oracle ceiling: sizing the mixture effect.} The main-text figure of ${\approx}+5.2$ Hit@2 pp is an \emph{in-sample, oracle-routed} upper bound. At $B{=}256$, applying the per-cell gaps of Table~\ref{tab:cell-crosstab} to the favorable cells gives $(.179{\times}56 + .159{\times}44)/328 \approx +5.2$ Hit@2 pp over pure Flat Vector (and ${\approx}+6.2$ pp over pure OSU-Mem, whose aggregate is $-1.0$ pp), while aggregate Recall rises from $-6.3$ pp to ${\approx}+1.4$ pp. Because it uses ground-truth per-query cell labels it is not held out, and it serves only to size the hidden structure---not to estimate a deployable gain. A deployable counterpart cannot observe per-query cell labels, since the cell is defined on ground-truth evidence unknown at decision time; it must operate at population granularity through the metadata marginals $\pi_\text{T+}$ and $\pi_\text{any+}$ above. A learned per-query cell predictor, and a closed-loop evaluation of either policy, are left to future work.

\paragraph{Caveats.} The derivation assumes (a) per-cell effect sizes transfer between $\tau$-bench and the target deployment, and (b) the conditional distribution of $\neg\text{T+E+}$ cells in the target deployment matches $\tau$-bench. Both are strong assumptions that a deployment should validate on a held-out slice of its own data. The main-text predictive rule should therefore be read as \emph{a reasonable starting point for budget estimation rather than a universally applicable decision rule}, but it is a starting point with an explicit mathematical derivation rather than an arbitrarily chosen threshold.

\section{Evidence-Structure Marginal Distributions}
\label{app:marginals}

Beyond the four-cell cross-tab summary in main-text Section~\ref{sec:cross-tab}, we report the full marginal distributions of the three pairwise-sharing scores on both the unaugmented and the multi-burst-augmented $\tau$-bench caches (Table~\ref{tab:marginals}).

\begin{table}[!htbp]
\centering
\footnotesize
\caption{Evidence-structure marginals, $\tau$-bench (328 queries). The left block reports the unaugmented cache and the right block reports the multi-burst-augmented cache.}
\label{tab:marginals}
\begin{tabular}{lcccccc}
\toprule
 & \multicolumn{3}{c}{\textbf{Unaugmented cache}} & \multicolumn{3}{c}{\textbf{Augmented cache}} \\
\cmidrule(lr){2-4}\cmidrule(lr){5-7}
View         & mean & median & \% zero & mean & median & \% zero \\
\midrule
Entity       & $.150$ & $.000$ & $69.5\%$ & $.322$ & $.200$ & $.0\%$ \\
Tool         & $.139$ & $.000$ & $53.7\%$ & $.323$ & $.286$ & $.0\%$ \\
Subgoal      & $.557$ & $.467$ & $15.5\%$ & $.641$ & $.524$ & $.0\%$ \\
Composite    & $.282$ & $.333$ & $4.9\%$  & $.429$ & $.452$ & $.0\%$ \\
\midrule
\multicolumn{7}{l}{\textbf{Derived structure:}} \\
Mean evidence per query      & \multicolumn{3}{c}{$3.50$} & \multicolumn{3}{c}{$6.36$} \\
Mean evidence pairs per query& \multicolumn{3}{c}{$5.27$} & \multicolumn{3}{c}{$18.07$} \\
T+E+ fraction                & \multicolumn{3}{c}{$17.1\%$ ($56/328$)} & \multicolumn{3}{c}{$100\%$ ($328/328$)} \\
T+E$-$ fraction              & \multicolumn{3}{c}{$29.3\%$ ($96/328$)} & \multicolumn{3}{c}{$0\%$} \\
T$-$E+ fraction              & \multicolumn{3}{c}{$13.4\%$ ($44/328$)} & \multicolumn{3}{c}{$0\%$} \\
T$-$E$-$ fraction            & \multicolumn{3}{c}{$40.2\%$ ($132/328$)}& \multicolumn{3}{c}{$0\%$} \\
\bottomrule
\end{tabular}
\end{table}

Two observations deserve emphasis. First, subgoal sharing on the unaugmented cache has mean $.557$. Focal evidence usually lies within a single user-turn segment. This means \emph{every disjoint baseline that segments by user-turn boundary has a substantive retrieval handle on a majority of queries}, even when entity and tool sharing are zero. OSU-Mem's multi-view construction provides \emph{additional} retrieval handles on T+ and E+ queries that a subgoal-only disjoint segmentation lacks. Second, Stage D augmentation roughly doubles the mean entity and tool sharing (from $\approx .14$ to $\approx .32$) while more than tripling the mean number of evidence pairs ($5.27 \to 18.07$).

\end{document}